\newtheorem{Thm}{Theorem}
\newtheorem{theorem}{Theorem}[section]
\newtheorem{Th}[theorem]{Theorem}
\newtheorem{Lem}[theorem]{Lemma}
\newtheorem{Prop}[theorem]{Proposition}
\theoremstyle{definition}
\newtheorem{Def}[theorem]{Definition}
\theoremstyle{remark}
\newtheorem{rmk}[theorem]{Remark}
\numberwithin{equation}{section}
\newcommand{\bb}{\mathbb}
\newcommand{\ms}{\mathscr}
\newcommand{\mr}{\mathrm}
\newcommand{\frk}{\mathfrak}
\begin{document}

\title{Generalized MICZ-Kepler Problems and Unitary Highest Weight Modules -- II}
\author{Guowu Meng}

\address{Department of Mathematics, Hong Kong Univ. of Sci. and
Tech., Clear Water Bay, Kowloon, Hong Kong}
%    Current address
%\curraddr{Department of Mathematics and Statistics}

\email{mameng@ust.hk}
%    \thanks will become a 1st page footnote.
%\thanks{The first author was supported in part by NSF Grant \#000000.}

%    General info
\subjclass[2000]{Primary 22E46, 22E70; Secondary 81S99, 51P05}

\date{April 23, 2007}

%\dedicatory{This paper is dedicated to our advisors.}

\keywords{conformal groups, Harish-Chandra modules, unitary highest
weight modules, Laguerre polynomials, spin bundles, MICZ-Kepler
problems}

\begin{abstract}
For each integer $n\ge 2$, we demonstrate that a $2n$-dimensional
generalized MICZ-Kepler problem has an $\widetilde{\mr{Spin}}(2,
2n+1)$ dynamical symmetry which extends the manifest $\mr{Spin}(2n)$
symmetry. The Hilbert space of bound states is shown to form a
unitary highest weight $\widetilde{\mr{Spin}}(2, 2n+1)$-module which
occurs at the first reduction point in the Enright-Howe-Wallach
classification diagram for the unitary highest weight modules. As a
byproduct, we get a simple geometric realization for such a unitary
highest weight $\widetilde{\mr{Spin}}(2, 2n+1)$-module.
\end{abstract}

\maketitle

\section {Introduction}
The Kepler problem is a physics problem in dimension three about two
bodies which attract each other by a force proportional to the
inverse square of their distance. The MICZ-Kepler problems,
discovered in the late 60s by McIntosh and Cisneros \cite{MC70} and
independently by Zwanziger \cite{Z68}, are natural cousins of the
Kepler problem. It was shown by Barut and Bornzin \cite{Barut71}
that the MICZ-Kepler problems all have a large dynamical symmetry
group---$\mr{Spin}(2, 4)$.

The $5$-dimensional analogues of the MICZ-Kepler problems were found
by Iwai \cite{Iwai90} and their dynamical symmetries were studied by
Pletyukhov and Tolkachev \cite{Pletyukhov99}. About two years ago,
the $D$-dimensional ($D\ge 3$) analogues of the MICZ-Kepler
problems, which extends both the MICZ-Kepler problems and Iwai's
$5$-dimensional analogues, were found by this author \cite{meng05}.
As in Ref. \cite{MZ07} we shall refer to the MICZ-Kepler problems
and their higher dimensional analogues as the \emph{generalized
MICZ-Kepler problems}. It is then natural to extend the study of the
dynamical symmetry to all generalized MICZ-Kepler problems. Indeed,
when the dimension $D\ge 3$ is an odd integer, this study has been
carried out recently by Zhang and this author \cite{MZ07}; in fact,
more refined results, which are of interest to representation
theorists, were obtained there.

The purpose of the present paper is to carry out this study in the
case when $D\ge 3$ is an even integer. Note that, when $D$ is even,
the magnetic charge must be either $0$ or $1/2$. We shall show that
for each integer $n\ge 2$, a $2n$-dimensional generalized
MICZ-Kepler problem always has an $\widetilde{\mr{Spin}}(2,
2n+1)$\footnote{It is a $4$-fold cover of $\mr{SO}_0(2, 2n+1)$ ---
the identity component of $\mr{SO}(2, 2n+1)$. By definition, it is
characterized by the homomorphism $\pi_1(\mr{SO}_0(2, 2n+1))=\bb
Z\oplus \bb Z_2\to \bb Z_2\oplus \bb Z_2$ which maps $(a, b)$ to
$(\bar a, b)$. Here $\bar a$ is the congruence class of $a$ modulo
$2$.} dynamical symmetry, i.e., \emph{its Hilbert space of bound
states forms an irreducible unitary module for
$\widetilde{\mr{Spin}}(2, 2n+1)$}. In fact, we shall show that the
Hilbert space of bound states forms a unitary highest weight module
for $\widetilde{\mr{Spin}}(2, 2n+1)$; more precisely, we shall
establish the following result:

\begin{Thm}\label{main} Assume $n>1$ is an integer and $\mu=0$ or
$1/2$. Let ${\ms H}(\mu)$ be the Hilbert space of bound states for
the $2n$-dimensional generalized MICZ-Kepler problem with magnetic
charge $\mu$, and $l_\mu=l+\mu+n-3/2$ for any integer $l\ge 0$.

1) The manifest unitary action of $\mr{Spin}(2n)$ on ${\ms H}(\mu)$
extends to a natural unitary action of $\mr{Spin}(2, 1)\times
\mr{Spin}(2n)$ under which ${\ms H}(\mu)$ decomposes as follows:
\begin{eqnarray}
{\ms H}(\mu) =\left\{ \begin{array}{ll}
 \hat \bigoplus_{l=0}^\infty
\left({\mathcal D}^-_{2l_\mu+2}\otimes (D_l^+\oplus D_l^-)\right) & \mbox{if $\mu=1/2$}\\
\\
\hat \bigoplus_{l=0}^\infty \left({\mathcal D}^-_{2l_\mu+2}\otimes
D_l^0\right) & \mbox{if $\mu=0$}
\end{array}\right.
\end{eqnarray} where $D_l^\pm$ is the irreducible module of $\mr{Spin}(2n)$
with highest weight $$(l+1/2, 1/2, \cdots,1/2, \pm{1/2}),$$ $D_l^0$
is the irreducible module of $\mr{Spin}(2n)$ with highest weight
$(l, 0, \cdots,0)$,  and ${\mathcal D}^-_{2l_\mu+2}$ is the
anti-holomorphic discrete series representation of $\mr{Spin}(2, 1)$
with highest weight $-l_\mu-1$.

2) There is a natural unitary action of $\widetilde{\mr{Spin}}(2,
2n+1)$ on $ {\ms H}(\mu)$ which extends the preceding unitary action
of $\mr{Spin}(2, 1)\times \mr{Spin}(2n)$. In fact, ${\ms H}(\mu)$ is
the unitary highest weight module of $\widetilde{\mr{Spin}}(2,
2n+1)$ with highest weight $\left(-(n+\mu-1/2),\mu, \cdots,
\mu\right)$; consequently, it occurs at the first reduction point of
the Enright-Howe-Wallach classification diagram\footnote{Page 101,
Ref. \cite{EHW82}. While there is a unique reduction point when
$\mu=1/2$, there are two reduction points when $\mu=0$ with the
second reduction point being the trivial representation. See also
Refs. \cite{Jakobsen81a,Jakobsen81b}.} for the unitary highest
weight modules, so it is a non-discrete series representation.

3) As a representation of the maximal compact subgroup
$\mr{Spin}(2)\times \mr{Spin}(2n+1)$,
\begin{eqnarray}
{\ms H}(\mu) = \hat \bigoplus_{l=0}^\infty \left(D(-l_\mu-1)\otimes
D^l\right)
\end{eqnarray} where $D^l$ is the irreducible module of $\mr{Spin}(2n+1)$
with highest weight $(l+\mu, \mu, \cdots,  \mu)$ and $D(-l_\mu-1)$
is the irreducible module of $\mr{Spin}(2)$ with weight $-l_\mu-1$.

\end{Thm}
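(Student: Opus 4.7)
The plan is to dispatch the three parts in order, using part (1) as the computational backbone for parts (2) and (3).

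For part (1), the starting point is the explicit diagonalization of the $2n$-dimensional generalized MICZ-Kepler Hamiltonian by separation of variables $(r,\omega)\in(0,\infty)\times S^{2n-1}$. The angular eigenvalue problem is the twisted Laplacian on $S^{2n-1}$ acting either on sections of the spin bundle (when $\mu=1/2$) or on scalar functions (when $\mu=0$); classical harmonic analysis decomposes each eigenspace into exactly one copy of $D_l^+\oplus D_l^-$ or one copy of $D_l^0$, indexed by $l\ge 0$. The radial part reduces, after the customary substitutions, to associated Laguerre polynomials in $cr$ for a suitable energy-dependent scale $c$. To obtain the $\mr{Spin}(2,1)$ action, I would introduce the standard Kepler $\mathfrak{sl}_2$ triple built from $r$, $rp^2$, and $\vec r\cdot\vec p$ (with the appropriate additive constants in the Cartan element to match conventions): these are manifestly self-adjoint on the weighted $L^2$-space, and the lowest-radial state at angular level $l$ has Cartan weight $-l_\mu-1$, pinning down the radial tower as the anti-holomorphic discrete series $\mathcal D^-_{2l_\mu+2}$.

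For part (2), I would produce two $\mr{Spin}(2n)$-vector families of operators---generalized Laplace--Runge--Lenz operators---that together with the $\mr{Spin}(2,1)\times\mr{Spin}(2n)$ from part (1) close on $\mathfrak{so}(2,2n+1)$. Concretely, these come as the $\mathfrak p^\pm$ operators with respect to the Hermitian Cartan decomposition; one is a (twisted) Runge--Lenz vector, the other its Hermitian conjugate, and each intertwines neighboring $l$-sectors. Having chosen the additive $1/r$ and Clifford corrections so that all commutators close correctly, I would identify the ground state (lowest $l$, lowest radial level) as the unique lowest $K$-type vector, annihilated by every negative root operator. Computing its joint weight under the maximal compact $\mr{Spin}(2)\times\mr{Spin}(2n+1)$ gives $(-(n+\mu-1/2),\mu,\dots,\mu)$, and matching this parameter against the Enright--Howe--Wallach list \cite{EHW82} verifies that it sits exactly at the first reduction point; unitarity is automatic since all generators are self-adjoint or form Hermitian-adjoint pairs on ${\ms H}(\mu)$.

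For part (3), the $\mr{Spin}(2)\times\mr{Spin}(2n+1)$ decomposition can be read off either abstractly---from the fact that the module just constructed is a highest weight module whose lowest $K$-type is $D(-(n+\mu-1/2))\otimes D^0$ and whose higher $K$-types are generated by symmetric powers of the $\mathfrak p^+$-action---or concretely, by combining part (1) with the $\mr{Spin}(2n+1)\downarrow\mr{Spin}(2n)$ branching that assembles $D_l^+\oplus D_l^-$ into $D^l$ when $\mu=1/2$ and $D_l^0$ into $D^l$ when $\mu=0$, together with the $\mr{Spin}(2,1)\downarrow\mr{Spin}(2)$ weight-extraction pulling out the weight-$(-l_\mu-1)$ vector inside $\mathcal D^-_{2l_\mu+2}$. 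Either route yields the stated formula.

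The main obstacle will be the operator-level construction in part (2). Writing down the generalized Runge--Lenz operators with the correct $\mu$-dependent $1/r$ and Clifford corrections, and then verifying all $\mathfrak{so}(2,2n+1)$ commutation relations as operator identities (rather than merely on matrix elements of bound states), requires some delicate juggling of identities between Clifford generators, the Dirac operator on $S^{2n-1}$, and the radial momentum, and the two cases $\mu=0$ and $\mu=1/2$ must be massaged into a uniform framework. Once these commutation relations are in hand, the remaining steps---highest weight identification, Enright--Howe--Wallach comparison, and $K$-type decomposition---are essentially bookkeeping.
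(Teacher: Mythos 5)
Your overall strategy is the same as the paper's: build the $\mathfrak{so}(2,2n+1)$ generators out of $r$, $r\pi^2$, $\vec r\cdot\vec\pi$ and two Runge--Lenz-type vectors, verify the commutation relations as operator identities, and then identify the module through its lowest $K$-type, the $(\mathfrak{so}(2n+1),\mathfrak{so}(2n))$ branching rule, and the Enright--Howe--Wallach list. The genuine gap is in your treatment of the radial $\mathfrak{sl}(2)$. The bound eigenfunctions at angular level $l$ have the form $r^{l+\mu}\,L^{2l_\mu+1}_{k-1}\bigl(\tfrac{2r}{k+l_\mu}\bigr)\,e^{-r/(k+l_\mu)}\,Y^{\sigma}_{l\mathbf m}(\Omega)$, so different radial levels $k$ live at \emph{different length scales}. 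Consequently the triple built from $r$, $r\pi^2+c/r$ and $\vec r\cdot\vec\pi$ does not map the span of the bound states into itself, and the bound states are not eigenvectors of the Cartan element $\Gamma_{-1}=\tfrac12(r\pi^2+r+c/r)$: the equation $H\psi=E\psi$, multiplied by $r$, is an eigenvalue equation for an energy-dependent combination of $\Gamma_{-1}$ and $\Gamma_{2n+1}$. So your step ``the lowest-radial state has Cartan weight $-l_\mu-1$, pinning down the radial tower as $\mathcal D^-_{2l_\mu+2}$'' does not go through as stated. The paper repairs this with two explicit moves that your proposal omits: a conjugation by $\sqrt r$ (so the generators become symmetric for the physical inner product $\int(\psi,\phi)\,d^{D}x$ rather than for a weighted one), and the Barut--Bornzin dilation $e^{-i\theta_{k+l_\mu}\hat T}$, which rescales every eigenfunction to the common scale $e^{-r}$. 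Only the resulting ``twisted'' space $\tilde{\mathcal H}$ is invariant under the generators; the representation is then transported back to $\mathcal H(\mu)$ through the isometry $\tau$. This is the mechanism by which the bound-state Hilbert space acquires the action at all, not bookkeeping.

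Two smaller omissions. First, exhaustion: to conclude that $\mathcal H(\mu)$ is irreducible you must show the cyclic module generated by the highest weight vector is everything; the paper does this by proving each energy level is already irreducible under $\mathfrak{so}(2n+1)$ (the branching rule forces a unique solution with highest weight $(I+\mu,\mu,\dots,\mu)$) and meets the $\mathfrak{sl}(2)$-orbit of the highest weight vector nontrivially. Your appeal to ``symmetric powers of the $\mathfrak p^+$-action'' gestures at this but does not establish it. Second, passing from a unitary $(\mathfrak g,K)$-module (unbounded, formally skew-adjoint operators on a dense subspace) to an actual unitary representation of the group $\widetilde{\mathrm{Spin}}(2,2n+1)$ requires Harish-Chandra's integration theorem; self-adjointness of the individual generators is not by itself sufficient.
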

We would like to point out that the unitary highest weight
representation of $\widetilde{\mr{Spin}}(2, 2n+1)$ with highest
weight $\left(-(n+\mu-1/2),\mu, \cdots, \mu\right)$ exists if and
only if when $\mu=0$ or $1/2$, see pages 127-128 in Ref.
\cite{EHW82} and the paragraph containing Eq. (2.7) in Ref.
\cite{EHW82}. Therefore, we have another representation theoretic
evidence for the nonexistence of the even-dimensional generalized
MICZ-Kepler problem with magnetic charge higher than $1/2$.

Readers who wish to have a quick geometric description of the
aforementioned unitary highest weight module of
$\widetilde{\mr{Spin}}(2, 2n+1)$ may consult the appendix. Readers
who wish to know more details about the classification
\cite{Jakobsen81a, Jakobsen81b, EHW82} of unitary highest weight
modules may start with a fairly readable account from Ref.
\cite{EHW82}. Note that there is no general classification result
for the family of unitary modules of real non-compact simple Lie
groups, and the subfamily of unitary \emph{highest weight} modules
is special enough so that such a nice classification result can
possibly exist. The first reduction point picked up by the ``Nature"
from the Enright-Howe-Wallach classification diagram is even more
special because it belongs to an even more special subfamily called
Wallach set.

\vskip 10pt In section \ref{GMIC}, we give a quick review of the
generalized MICZ-Kepler problems in even dimensions. For the
computational purpose in the subsequent section, we quickly review
the gauge potential\footnote{It is $\sqrt {-1}$ times the local {\em
connection one-form}. } for the background gauge field (i.e.,
\emph{connection}) under a particular local gauge (i.e., {\em bundle
trivialization}), and then quote from Ref. \cite{meng05} some key
identities satisfied by the gauge potential. In section \ref{HDS},
we introduce the dynamical symmetry operators and show that they
satisfy the commutation relations for the generators\footnote{Here
we adopt the practice in physics: the Lie algebra generators act as
hermitian operators in all unitary representations.} of
$\frk{so}_0(2, 2n+1)$. We also show that these dynamical symmetry
operators satisfy a set of quadratic relations \footnote{This set of
quadratic relations will be shown \cite{meng07} to algebraically
characterize the unitary highest weight $\widetilde{\mr{Spin}}(2,
2n+1)$-modules stated in Theorem \ref{main} above.}. These results
are stated as Theorem \ref{keyr}. In section \ref{Rep}, we start
with a preliminary discussion of the representation problem and
point out the need of ``twisting". Then we gave a review of the
(bound) energy eigenspaces (i.e., \emph{eigenspaces of the
harmiltonian} viewed as a hermitian operator on the physical Hilbert
space) and finally introduce the notion of ``twisted'' energy
eigenspaces which is soon shown to be the space of $L^2$-sections of
a canonical hermitian bundle. In the last section, we solve the
representation problem by proving two propositions from which
Theorem \ref{main} follows quickly. In the appendix, each of the
unitary highest weight representation of $\mr{Spin}(2, 2n+1)$
encountered here is geometrically realized as the space of all $L^2$
sections of a canonical hermitian bundle. Via communications with
Profs. R. Howe and N. Wallach, we learned that these representations
can be imbedded into the kernel of certain canonical differential
operators, see Refs. \cite{Tan-Howe, KO03} for the case $\mu=0$ and
Ref. \cite{EW97} for the general case.

\section{Review of generalized MICZ-Kepler problems}\label{GMIC}
From the physics point of view, a MICZ-Kepler problem is a
generalization of the Kepler problem by adding a suitable background
magnetic field, while at the same time making an appropriate
adjustment to the scalar Coulomb potential so that the problem is
still integrable. The configuration space is the punctured 3D
Euclidean space, and the background magnetic field is a Dirac
monopole. To be more precise, the (dimensionless) hamiltonian of a
MICZ-Kepler problem with magnetic charge $\mu $ is
\begin{eqnarray}H
 = -\frac{1}{2}{\Delta}_{\mathcal A} + \frac{\mu^2}{2r^2}
 - \frac{1}{r}\;.
\end{eqnarray}
Here $\Delta_{\mathcal A}$ is the Laplace operator twisted by the
gauge potential $\mathcal A$ of a Dirac monopole under a particular
gauge, and $\mu$ is the magnetic charge of the Dirac monopole, which
must be a half integer.

To extend the MICZ-Kepler problems beyond dimension three, one needs
a suitable generalization of the Dirac monopoles. Fortunately this
problem was solved in Refs. \cite{ meng04, Cotaescu05, meng05}. We
review the work here.

\subsection{Generalized MICZ-Kepler problems}

Let $D\ge 3$ be an integer, $\bb R^{D}_*$ be the punctured
$D$-space, i.e., $\bb R^D$ with the origin removed. Let $ds^2$ be
the cylindrical metric on $\bb R^D_*$. Then $(\bb R_*^D, ds^2)$ is
the product of the straight line $\bb R$ with the round sphere ${\mr
S}^{D-1}$. Since we are interested in the even dimensional
generalized MICZ-Kepler problems only in this paper, we assume $D$
is even.

Let $\mathcal S$ be the spinor bundle of $(\bb R^D_*, ds^2)$, then
$\mathcal S$ corresponds to the fundamental spin representations
${\bf s}$ of $\frk{so}_0(D-1)$ (The Lie algebra of $\mr{SO}(D-1)$).
Note that this spinor bundle is endowed with a natural $\mr{SO}(D)$
invariant connection
--- the Levi-Civita spin connection of $(\bb R^D_*, ds^2)$. As a
result, the Young product of $I$ copies of these bundles, denoted by
${\mathcal S}^I$, are also equipped with natural $\mr{SO}(D)$
invariant connections. (By convention, ${\mathcal S}^{0}$ is the
product complex line bundle with the product connection.) The
corresponding representation of $\frk{so}_0(D-1)$ will be denoted by
${\bf s}^{2\mu}$.

\begin{Def} Let $n\ge 2$ be an integer and $\mu= 0$ or $1/2$.
The $2n$-dimensional generalized MICZ-Kepler problem with magnetic
charge $\mu$ is defined to be the quantum mechanical system on $\bb
R^{2n}_*$ for which the wave-functions are sections of ${\mathcal
S}^{2\mu}$, and the hamiltonian is
\begin{eqnarray}
H
 = -\frac{1}{2}\Delta_\mu + \frac{(n-1)\mu}{2r^2}
 - \frac{1}{r}
\end{eqnarray}
where $\Delta_\mu$ is the Laplace operator twisted by ${\mathcal
S}^{2\mu}$.
\end{Def}

Upon choosing a local gauge, the background gauge field (i.e., the
natural connection on ${\mathcal S}^{2\mu}$) can be represented by a
gauge potential ${\mathcal A}_\alpha$ in an explicit form; then
$\Delta_\mu$ can be represented explicitly by
$\sum_\alpha(\partial_\alpha +i {\mathcal A}_\alpha)^2$. Since the
gauge potential is of crucial importance, we review some of its
properties in the next subsection.

\subsection {Basic identities for the gauge potential}\label{GDM}

We write $\vec r = (x_1, x_2, \ldots, x_{D-1}, x_D)$ for a point in
$\bb R^{D}$ and $r$ for the length of $\vec r$. The small Greek
letters $\mu$, $\nu$, etc run from $1$ to $D$ and the lower case
Latin letters $a$, $b$ etc run from $1$ to $D-1$. We use the
Einstein convention that repeated indices are always summed over.

Under a suitable choice of local gauge on ${\bb R}^{D}$ with the
negative $D$-th axis removed, the gauge field can be represented by
the following gauge potential:
\begin{eqnarray}\label{mnple}
{\mathcal A}_D=0,\hskip 20 pt {\mathcal A}_b=-{1\over
r(r+x_D)}x_a\gamma_{ab}
\end{eqnarray}
where $\gamma_{ab}={i\over 4}[\gamma_a,\gamma_b]$ with $\gamma_a$
being the ``gamma matrix" for physicists.  Note that $\gamma_a=ie_a$
with $e_a$ being the element in the Clifford algebra that
corresponds to the $a$-th standard coordinate vector of $\bb
R^{D-1}$.

The field strength of $\mathcal A_\alpha$ is then given by
\begin{eqnarray}
\left\{\begin{array}{rcl} F_{Db}& = & {1\over
r^3}x_a\gamma_{ab},\cr\\ F_{ab}& = & -{2\gamma_{ab}\over r(r+x_D)}+
{1\over r^2(r+x_D)^2}\cdot\cr & &\left((2+{x_D\over r})x_c(x_a
\gamma_{cb}-x_b \gamma_{ca}) +ix_d x_c[\gamma_{d a},\gamma_{c
b}]\right)\end{array}\right.
\end{eqnarray}

Here are some identities from Ref. \cite{meng05} that our later
computations will crucially depend on:
\begin{Lem}\label{lemma}
Let $\mathcal A_\alpha$ be the gauge potential defined by Eq.
(\ref{mnple}) and let $F_{\alpha\beta}$ be its field strength.

1) The following identities are valid in any representation of
$\frk{so}_0(D-1)$:
\begin{eqnarray}\label{Id}
F_{\mu\nu}F^{\mu\nu}=\frac{2}{r^4}c_2,\quad
{[\nabla_\kappa,F_{\mu\nu}]}={1\over r^2}\left( x_\mu
F_{\nu\kappa}+x_\nu F_{\kappa \mu}-2x_\kappa F_{\mu\nu} \right), \cr
x_\mu {\mathcal A}_\mu=0,\hskip 20pt x_\mu F_{\mu\nu}=0, \hskip 20pt
[\nabla_\mu, F_{\mu\nu}]=0, \cr r^2[F_{\mu\nu},
F_{\alpha\beta}]+iF_{\mu\beta}\delta_{\alpha\nu}-
iF_{\nu\beta}\delta_{\alpha\mu}+iF_{\alpha\mu}\delta_{\beta\nu}-iF_{\alpha\nu}\delta_{\beta\mu}\cr
={i\over r^2}\left(x_\mu x_\alpha F_{\beta\nu}+x_\mu x_\beta
F_{\nu\alpha}-x_\nu x_\alpha F_{\beta\mu}-x_\nu x_\beta
F_{\mu\alpha} \right),
\end{eqnarray}
where $\nabla_\alpha=\partial_\alpha+i\mathcal A_\alpha$, and
$c_2=c_2[\frk{so}_0(D-1)]={1\over 2}\gamma_{ab}\gamma_{ab}$ is the
(quadratic) Casimir operator of $\frk{so}_0(D-1)$.

2) When $D=2n$, identity
\begin{eqnarray}
r^2F_{\lambda\alpha}F_{\lambda\beta}={n-1\over 2}\left({1\over
r^2}\delta_{\alpha\beta}-{x_\alpha x_\beta\over
r^4}\right)+i(n-{3\over 2})F_{\alpha\beta}
\end{eqnarray} holds in the fundamental spin representation $\bf s$ of $\frk{so}_0(2n-1)$.
\end{Lem}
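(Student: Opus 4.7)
The plan is to verify all the identities by direct computation from the explicit formula \eqref{mnple} for $\mathcal{A}$, organizing the calculation so that most of Part 1 follows from structural features of $\mathcal{A}$ rather than brute manipulation. I would begin from three properties: (i) $\mathcal{A}_D=0$; (ii) $\mathcal{A}_\mu$ is homogeneous of degree $-1$, so $(x_\lambda\partial_\lambda+1)\mathcal{A}_\mu=0$ and $F_{\mu\nu}$ is homogeneous of degree $-2$; (iii) the antisymmetry $\gamma_{ab}=-\gamma_{ba}$ forces $x_\mu\mathcal{A}_\mu=-x_a x_b\gamma_{ab}/(r(r+x_D))=0$. From (iii) together with $F=d\mathcal{A}+i\mathcal{A}\wedge\mathcal{A}$ and (ii), a short computation gives $x_\mu F_{\mu\nu}=0$. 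Combining the degree $-2$ homogeneity with $x_\mu F_{\mu\nu}=0$ then produces the scaling identity $[\nabla_\kappa,F_{\mu\nu}]=r^{-2}(x_\mu F_{\nu\kappa}+x_\nu F_{\kappa\mu}-2x_\kappa F_{\mu\nu})$, and its trace in $\kappa=\mu$ is the Yang-Mills-type identity $[\nabla_\mu,F_{\mu\nu}]=0$.

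The Casimir identity $F_{\mu\nu}F^{\mu\nu}=2c_2/r^4$ is obtained by splitting into $2F_{Db}F_{Db}+F_{ab}F_{ab}$, substituting the explicit $F$'s, and applying the identity for $\sum_b\gamma_{ab}\gamma_{cb}$ coming from the Lie bracket $[\gamma_a,\gamma_b]$; many $x$-dependent cross-terms cancel between the two pieces and the answer collapses to $\gamma_{ab}\gamma_{ab}/r^4$. The four-$F$ bracket identity is the most tedious of Part 1; I would attack it by $\mr{SO}(D-1)$-equivariance. Both sides are rank-four tensors in $x$ valued in a Lie-algebra representation and are equivariant under the stabilizer of $x$, so it suffices to verify the identity after rotating $x$ onto the positive $D$-axis; at that point $\mathcal{A}_b=0$ by \eqref{mnple}, $F_{Db}$ and $F_{ab}$ simplify, and the identity reduces to the $\frk{so}(D-1)$ commutation relation $[\gamma_{pq},\gamma_{rs}]=i(\delta_{pr}\gamma_{qs}-\cdots)$.

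Part 2 is the only statement sensitive to the representation and to the even parity of $D=2n$, and its proof must invoke an identity valid in the fundamental spin $\mathbf{s}$ but not in a generic representation of $\frk{so}_0(2n-1)$; specifically, the Clifford relation $\gamma_a\gamma_b+\gamma_b\gamma_a=2\delta_{ab}$ forces $\sum_b\gamma_{ab}\gamma_{bc}$ to collapse to a specific linear combination of $\gamma_{ac}$ and $\delta_{ac}$, whose dimension-dependent coefficients ultimately supply the numbers $(n-1)/2$ and $n-3/2$. Substituting \eqref{mnple} into $r^2 F_{\lambda\alpha}F_{\lambda\beta}$, splitting into $\lambda=D$ and $\lambda\ne D$ contributions, and applying this Clifford simplification together with $x_\mu F_{\mu\nu}=0$ yields the right-hand side. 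The main obstacle is the clean extraction of those coefficients, which is pure index bookkeeping; as in Part 1, using $\mr{SO}(2n-1)$-equivariance to align $x$ with the $D$-axis reduces the computation to a small number of cases and keeps the algebra manageable.
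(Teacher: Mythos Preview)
The paper does not prove this lemma: it is introduced with ``Here are some identities from Ref.~\cite{meng05}'' and is simply quoted from the author's earlier paper, so there is no argument here to compare against. Your direct-computation strategy, organized around the $\mr{SO}(D-1)$-equivariance and the reduction to the positive $D$-axis, is the natural one and is presumably close to what that reference does; in particular your treatment of Part~2 via the Clifford identity $\sum_b\gamma_{ab}\gamma_{bc}=-\tfrac{n-1}{2}\,\delta_{ac}+i(n-\tfrac{3}{2})\,\gamma_{ac}$ in ${\bf s}$ is exactly right and is where the coefficients $(n-1)/2$ and $n-3/2$ come from.

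There is, however, one genuine gap in your Part~1 sketch. You assert that degree~$-2$ homogeneity of $F$ together with $x_\mu F_{\mu\nu}=0$ already ``produces'' the identity $[\nabla_\kappa,F_{\mu\nu}]=r^{-2}(x_\mu F_{\nu\kappa}+x_\nu F_{\kappa\mu}-2x_\kappa F_{\mu\nu})$. That implication is false: on $\bb R^3_*$ the abelian $2$-form $F_{\mu\nu}=r^{-3}\epsilon_{\mu\nu\lambda}x_\lambda\,h(x/r)$, with $h$ any nonconstant smooth function on $S^2$, satisfies both of your hypotheses (and even the Bianchi identity $dF=0$), yet at the north pole one computes $\partial_1 F_{12}=r^{-3}(\partial_1 h)(e_3)\neq 0$ while your right-hand side vanishes there. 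What your argument is missing is the full $\mr{SO}(D)$-equivariance of this particular connection (equivalently, the extra input $\mathcal A_\theta=0$ recorded immediately after the lemma), not merely $\mr{SO}(D-1)$-equivariance; once you use that, the same reduction-to-the-pole trick you already invoke for the four-$F$ bracket dispatches this identity in one line, and then $[\nabla_\mu,F_{\mu\nu}]=0$ does follow as the trace, as you say.
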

Remark that ${\mathcal A}_r={\mathcal A}_\theta=0$, where ${\mathcal
A}_r$ and ${\mathcal A}_\theta$ are the $r$ and $\theta$ components
of $\mathcal A$ in the polar coordinate system $(r, \theta,
\theta_1, \cdots,\theta_{D-3},\phi)$ for $\bb R^D_*$ with $\theta$
being the angle between $\vec r$ and the positive $D$-th axis.

\section {The dynamical symmetry}\label{HDS}
For the remainder of this paper, we only consider a fixed
$2n$-dimensional generalized MICZ-Kepler problem. Recall that the
configuration space is $\bb R^D_*$ where $D=2n$. For our
computational purposes, it suffices to work on ${\bb R}^D$ with the
negative $D$-axis removed. Introduce the notations
$\pi_\alpha:=-i\nabla_\alpha$, $c:=(n-1)\mu$. Then $[\pi_\alpha,
\pi_\beta]=-iF_{\alpha\beta}$.

Just as in Ref. \cite{MZ07}, we let
\begin{eqnarray}\label{def1}
\left\{\begin{array}{rcl}\vec \Gamma & := &r\vec \pi, \hskip 10pt X
:=r\pi^2+{c\over r}, \hskip 10pt
 Y := r,\cr\\
 J_{\alpha\beta}  &:=& i[\Gamma_\alpha, \Gamma_\beta],\hskip
 10pt
 \vec Z :=
i[\vec \Gamma, X],\hskip 10pt\vec W :=  i[\vec \Gamma, Y]=\vec r;
\end{array}\right.
\end{eqnarray}
and
\begin{eqnarray}\label{def2}
\left\{\begin{array}{l} \Gamma_{D+1} := {1\over 2}\left(X-Y \right),
\hskip 30pt
 \Gamma_{-1} := {1\over 2}\left(X+Y \right),
 \cr\\
 \vec A := {1\over 2}\left(\vec Z-\vec W \right),  \hskip 10pt
 \vec M := {1\over 2}\left(\vec Z+\vec W \right),\hskip 10pt
 T :=i[\Gamma_{D+1}, \Gamma_{-1}].\end{array}\right.
\end{eqnarray}
Some relatively straightforward but lengthy computations yield
\begin{eqnarray}\left\{\begin{array}{rcl}
J_{\alpha\beta}  &= & x_\alpha\pi_\beta
-x_\beta\pi_\alpha+r^2F_{\alpha\beta},\cr\\ A_\alpha &= &{1\over
2}x_\alpha\pi^2 - \pi_\alpha(\vec r\cdot \vec \pi)+r^2F_{\alpha
\beta}\pi_\beta-{c\over 2r^2}x_\alpha +{i\over
2}(D-3)\pi_\alpha-{1\over 2}x_\alpha,\cr\\ M_\alpha &= &{1\over
2}x_\alpha\pi^2 - \pi_\alpha(\vec r\cdot \vec \pi)+r^2F_{\alpha
\beta}\pi_\beta-{c\over 2r^2}x_\alpha +{i\over
2}(D-3)\pi_\alpha+{1\over 2}x_\alpha,\cr\\ T  &= & \vec r\cdot \vec
\pi-i{D-1\over 2},\cr\\ \Gamma_\alpha &=& r\pi_\alpha,\cr\\
\Gamma_{-1} &= & {1\over 2}\left(r\pi^2+r+{c\over r} \right),\cr\\
\Gamma_{D+1} &= & {1\over 2}\left(r\pi^2-r+{c\over r}
\right).\end{array}\right.\nonumber
\end{eqnarray}

Let the capital Latin letters $A$, $B$ run from $-1$ to $D+1$.
Introduce $J_{AB}$ as follows:
\begin{eqnarray}\label{defofJ}
J_{AB}=\left\{\begin{array}{ll}  J_{\mu\nu} & \hbox{if $A=\mu$,
$B=\nu$}\cr  A_\mu & \hbox{if $A=\mu$, $B=D+1$}\cr M_\mu& \hbox{if
$A=\mu$, $B=-1$}\cr \Gamma_\mu & \hbox{if $A=\mu$, $B=0$}\cr T &
\hbox{if $A=D+1$, $B=-1$}\cr \Gamma_{D+1} & \hbox{if $A=D+1$,
$B=0$}\cr \Gamma_{-1} & \hbox{if $A=-1$, $B=0$}\cr -J_{BA} &
\hbox{if $A>B$}\cr 0 & \hbox{if $A=B$}.\cr
\end{array}\right.
\end{eqnarray}

\begin{Thm}\label{keyr} Assume $\mu=0$ or $1/2$.
Let $C^\infty({\mathcal S}^{2\mu})$ be the space of smooth sections
of ${\mathcal S}^{2\mu}$, and $J_{AB}$ be defined by Eq.
(\ref{defofJ}).

1) As operators on $C^\infty({\mathcal S}^{2\mu})$, $J_{AB}$'s
satisfy the following commutation relations:
\begin{eqnarray}\label{cmtr} [J_{AB},
J_{A'B'}]=-i\eta_{AA'}J_{BB'}-i\eta_{BB'}J_{AA'}+i\eta_{AB'}J_{BA'}+i\eta_{BA'}J_{AB'}
\end{eqnarray}
where the indefinite metric tensor $\eta$ is ${\mr
{diag}}\{++-\cdots-\}$ relative to the following order: $-1$, $0$,
$1$, \ldots, $2n+1$ for the indices.

2) As operators on $C^\infty({\mathcal S}^{2\mu})$,
\begin{eqnarray} \{J_{AB}, {J^A}_C\}:=J_{AB}{J^A}_C+{J^A}_CJ_{AB}=-2a\eta_{BC}
\end{eqnarray}where $a=n-1/2-c$.
\end{Thm}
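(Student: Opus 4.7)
The plan is to verify both parts by a direct but organized computation using the explicit formulas for $J_{AB}$ listed just before the theorem together with the curvature identities of Lemma~\ref{lemma}. The key organizational principle is $\mr{Spin}(2n)$-equivariance: $J_{\mu\nu}$ generates the rotation subalgebra, $\Gamma_\mu, A_\mu, M_\mu$ transform as vectors under it, and $\Gamma_{-1}, \Gamma_{D+1}, T$ are scalars. Verifying this reduces the commutator checks in part~1 to a handful of ``skeleton'' brackets and cuts the quadratic check in part~2 down to only a few representative pairs $(B,C)$ surviving modulo the diagonal action.

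For part~1, I would first verify the $\frk{so}(2n)$ relation $[J_{\mu\nu}, J_{\lambda\rho}]$ from $J_{\mu\nu} = x_\mu\pi_\nu - x_\nu\pi_\mu + r^2 F_{\mu\nu}$, using $[\pi_\alpha, \pi_\beta] = -iF_{\alpha\beta}$ together with the quadratic $F$-commutator identity in the last two lines of (2.6); next, I would confirm the stated transformation laws of $\Gamma_\mu, A_\mu, M_\mu$ as vectors and of $\Gamma_{-1}, \Gamma_{D+1}, T$ as scalars. What remains is the skeleton list $[\Gamma_\mu, \Gamma_\nu]$, $[\Gamma_\mu, \Gamma_{D+1}]$, $[\Gamma_\mu, \Gamma_{-1}]$, $[\Gamma_{-1}, \Gamma_{D+1}]$, $[T, \Gamma_\mu]$, $[T, \Gamma_{\pm}]$, $[A_\mu, M_\nu]$ and a few more, each of which I would expand using the definitions $X = r\pi^2 + c/r$, $Y = r$, the auxiliary identities $x_\mu{\mathcal A}_\mu = 0$, $x_\mu F_{\mu\nu} = 0$, $[\nabla_\mu, F_{\mu\nu}] = 0$, and ordinary Heisenberg brackets $[x_\alpha, \pi_\beta] = i\delta_{\alpha\beta}$. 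The main obstacle here is the bookkeeping of terms involving $[\nabla_\kappa, F_{\mu\nu}]$ and $[F_{\mu\nu}, F_{\alpha\beta}]$: these produce Clifford-algebra-valued pieces that close only after invoking the quadratic $F$-commutator identity of Lemma~\ref{lemma}, and reassembling them into the explicit expressions for $A_\mu, M_\mu$, and $T$ is where signs and numerical coefficients must line up precisely.

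For part~2, I would invoke $\mr{Spin}(2n)$-equivariance again to reduce $\{J_{AB}, J^A{}_C\} = -2a\eta_{BC}$ to a short list of independent cases: $(B,C)$ equal to $(\mu,\nu)$, $(0,0)$, $(D+1, D+1)$, $(-1,-1)$ on the diagonal, together with the off-diagonal pairs $(0, D+1)$, $(0,-1)$, $(D+1,-1)$ and $(\mu, 0)$, $(\mu, D+1)$, $(\mu, -1)$. Each case produces a quadratic polynomial in $x, \pi$, and the curvature; the decisive ingredient is part~2 of Lemma~\ref{lemma}, which is special to the fundamental spin representation (hence the restriction $\mu = 0$ or $1/2$) and simultaneously produces the precise scalar $a = n - 1/2 - c$ and the cancellations that force all off-diagonal sums to vanish. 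I expect the hardest case to be $(B,C) = (\mu,\nu)$, where the combination $\{M_\mu, M_\nu\} + \{\Gamma_\mu, \Gamma_\nu\} - \{A_\mu, A_\nu\} - \sum_\lambda \{J_{\lambda\mu}, J_{\lambda\nu}\}$ must collapse to $2a\delta_{\mu\nu}$, and this cancellation hinges on exactly the coefficient $i(n-3/2)F_{\alpha\beta}$ appearing in that dimensional identity.
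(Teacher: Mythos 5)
Your plan is correct and is in substance the same argument the paper uses: the paper simply transplants the direct computation from the odd-dimensional companion paper [MZ07], substituting $n\mapsto n-1/2=\frac{D-1}{2}$ and quoting part 2) of Lemma~\ref{lemma} in the form $r^2F_{\lambda\alpha}F_{\lambda\beta}=c\bigl(\tfrac{1}{r^2}\delta_{\alpha\beta}-\tfrac{x_\alpha x_\beta}{r^4}\bigr)+i\tfrac{D-3}{2}F_{\alpha\beta}$, which is exactly the identity you single out as the decisive dimension-specific input. Your reduction via $\mr{Spin}(2n)$-equivariance to a skeleton of brackets and representative pairs $(B,C)$, with the sign bookkeeping in $\{M_\mu,M_\nu\}+\{\Gamma_\mu,\Gamma_\nu\}-\{A_\mu,A_\nu\}-\sum_\lambda\{J_{\lambda\mu},J_{\lambda\nu}\}=2a\delta_{\mu\nu}$, is consistent with the metric convention and with $a=n-1/2-c$.
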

\begin{proof} The proof of this theorem can be transplanted directly from the
one for theorem 2 in Ref. \cite{MZ07}. In fact, one just needs to
modify the proof there by doing the followings: 1) replace $n$ in
Ref. \cite{MZ07} by $n-1/2$ which is always $D-1\over 2$, 2) quote
part 2) of lemma \ref{lemma} as follows: \emph{Identity
\begin{eqnarray}\label{anotheralgReason}
r^2F_{\lambda\alpha}F_{\lambda\beta}=c\left({1\over
r^2}\delta_{\alpha\beta}-{x_\alpha x_\beta\over
r^4}\right)+i{D-3\over 2}F_{\alpha\beta}
\end{eqnarray} holds in representation ${\bf s}^{2\mu}$ of
$\frk{so}_0(D-1)$}. \end{proof}

Note that, when $D$ is even, identity (\ref{anotheralgReason}) is
not true unless $\mu=0$ or $1/2$; this is another evidence for the
nonexistence of the even-dimensional generalized MICZ-Kepler problem
with magnetic charge higher than $1/2$.

\section{Representation theoretical aspects --- the preliminary part}\label{Rep}
The main objective in the rest of this paper is to show that the
algebraic direct sum $\mathcal H$ of the energy eigenspaces of a
generalized MICZ-Kepler problem in dimension $2n$ is a unitary
highest weight $(\frk{g}, K)$-module where $\frk{g}=\frk{so}(2n+3)$
and $K=\mr{Spin}(2)\times \mr{Spin}(2n+1)$. Along the way, we prove
Theorem \ref{main}.

We can label the generators of $\frk{g}_0$ (the Lie algebra of
$\widetilde{\mr{Spin}}(2,2n+1)$) as follows:
$$
M_{AB}=-M_{BA} \quad \mbox{for $A, B = -1, 0, 1,  \ldots, 2n+1$ }
$$ where in the $(2n+3)$-dimensional defining representation, the
matrix elements of $M_{AB}$ are given by
$$
[M_{AB}]_{JK}=-i(\eta_{AJ}\eta_{BK}-\eta_{BJ}\eta_{AK})
$$ with the indefinite metric tensor $\eta$ being ${\mr
{diag}}\{++-\cdots-\}$ relative to the following order: $-1$, $0$,
$1$, \ldots, $2n+1$ for the indices.

One can easily show that
\begin{eqnarray}\label{cmtrm} \hskip 20pt[M_{AB},
M_{A'B'}] =
i(\eta_{AA'}M_{BB'}+\eta_{BB'}M_{AA'}-\eta_{AB'}M_{BA'}-\eta_{BA'}M_{AB'}).
\end{eqnarray}

In view of the sign difference between the right hand sides of Eqs.
(\ref{cmtr}) and (\ref{cmtrm}), we define the representation
$(\tilde\pi,C^\infty({\mathcal S}^{2\mu}))$ of $\frk{g}$ as follows:
for $\psi\in C^\infty({\mathcal S}^{2\mu})$, \begin{eqnarray}\fbox{$
\tilde\pi(M_{AB})(\psi)= -\hat J_{A B}\psi$}
\end{eqnarray}
where, by definition, $\hat J_{AB}:={1\over {\sqrt r}}J_{AB}\sqrt
r$. Note that $\tilde \pi$ defines a representation for $\frk{g}$
only when $\mu=0$ or $1/2$.

However, what is really relevant for us is just a subspace of
$C^\infty({\mathcal S}^{2\mu})$, i.e., $\mathcal H$. Actually, the
story is bit more involved: what is really invariant under
$\tilde\pi$ is not $\mathcal H$, but a twisted version of $\mathcal
H$ which is denoted by $\tilde {\mathcal H}$ later; and there is a
twist linear equivalence
$$\tau:\;\mathcal H \to \tilde{\mathcal H}$$ which preserves the $L^2$-norm,
such that, viewing $\tau$ as an equivalence of representations, we
get representation $(\pi, \mathcal H)$. Because of this intricacy,
we shall devote the next two subsections to some preparations.

\subsection{Review of the (bound) energy eigenspaces} The bound eigen-states (i.e., \emph{$L^2$ eigen-sections} of the Hamiltonian) of
the generalized MICZ-Kepler problems have been analyzed in section
5.2 of Ref. \cite{meng05} by using the classical analytic method
with the help of the representation theory for compact Lie groups.
Recall that the (bound) energy spectrum is
\begin{eqnarray}
E_I=-{1\over 2(I+n+\mu-1/2)^2}
\end{eqnarray} where $I=0, 1, 2, \cdots$.

Denote by ${\mathcal S}^{2\mu}|_{\mr{S}^{2n-1}}$ the restriction
bundle of ${\mathcal S}^{2\mu}$ to the unit sphere $\mr{S}^{2n-1}$.
By the Frobenius reciprocity plus a branching rule for
$(\mr{Spin}(2n), \mr{Spin}(2n-1))$, one has
\begin{eqnarray}\label{brch0}
L^2({\mathcal S}^{2\mu}|_{\mr{S}^{2n-1}} )=\left\{
\begin{array}{ll}
\hat\bigoplus_{l\ge 0} ({\ms R}_{l}^+\oplus{\ms R}_{l}^-) & \mbox{if $\mu=1/2$}\\
\\
\hat\bigoplus_{l\ge 0} {\ms R}_{l}^0 & \mbox{if $\mu=0$}
\end{array}\right.
\end{eqnarray} where ${\ms R}_{l}^\pm$ is the irreducible representation of
$\mr{Spin}(2n)$ with highest weight $$(l+1/2, 1/2, \cdots,
1/2,\pm 1/2)$$ and ${\ms R}_{l}^0$ is the irreducible representation
of $\mr{Spin}(2n)$ with highest weight $(l, 0, \cdots, 0)$.
One can also show that, the infinitesimal action of $\mr{Spin}(2n)$
on $C^\infty({\mathcal S}^{2\mu})$ is just the restriction of
$\tilde\pi$ to $\mr{span}_{\bb R}\{M_{\alpha\beta}\mid
1\le\alpha<\beta\le 2n\}=\frk{so}_0(2n)$. It is then clear that
$\tilde\pi(M_{\alpha\beta})$'s act only on the angular part of the
wave sections --- a consequence which can also be deduced from the
fact that $\hat J_{\alpha\beta}$'s commute with the multiplication
by a smooth function of $r$.

Let
$$\fbox{$l_\mu=l+\mu+n-3/2$}\,.$$ For $\sigma\in\{+, -, 0\}$, we let $\{Y_{l\bf
m}^\sigma(\Omega)\}_{{\bf m}\in {\mathcal I}^\sigma(l)}$ be an
orthornormal (say Gelfand-Zeltin) basis for ${\ms R}_{l}^\sigma$.
Then, an orthornormal basis for the energy eigenspace $\ms H_I$ with
energy $E_I$ is
\begin{eqnarray}
\{\psi_{kl \bf m}^\sigma:=R_{kl_\mu}(r)Y_{l\bf m}^\sigma(\Omega)\mid
k+l=I+1, k\ge 1, {\bf m}\in {\mathcal I}^\sigma(l), l\ge 0,
\sigma\in \{+,-\}\}\nonumber
\end{eqnarray} if $\mu=1/2$, and is
\begin{eqnarray}
\{\psi_{kl \bf m}^0:=R_{kl_\mu}(r)Y_{l\bf m}^0(\Omega)\mid k+l=I+1,
{\bf m}\in {\mathcal I}^0(l), k\ge 1, l\ge 0\}\nonumber
\end{eqnarray} if $\mu=0$.
Here $R_{kl_\mu}\in L^2({\bb R}_+, r^{2n-1}\,dr)$ is a square
integrable (with respect to measure $r^{2n-1}\,dr$) solution of the
radial Schr\"{o}dinger equation:
\begin{eqnarray}\label{rSchEq}
\left(-{1\over 2r^{2n-1}}\partial_r
r^{2n-1}\partial_r+{l_\mu(l_\mu+1)-(n-{1\over 2})(n-{3\over 2})\over
2r^2}-{1\over r}\right)R_{kl_\mu}=E_{k-1+l}R_{kl_\mu}.\cr
\end{eqnarray}
Note that $R_{kl_\mu}$ is of the form
$$r^{-n+1/2}y_{kl_\mu}(r)\exp\left(-{r\over k+l_\mu}\right)$$
with $y_{kl_\mu}(r)$ satisfying Eq.
\begin{eqnarray}
\left( {d^2\over dr^2} -{2\over k+l_\mu} {d\over dr}+\left[{2\over
r}-{l_\mu(l_\mu+1)\over
r^2}\right]\right)y_{kl_\mu}(r)=0.\end{eqnarray}In term of the
generalized Laguerre polynomials,
$$
y_{kl_\mu}(r)=c(k,l) r^{l_\mu+1}L^{2l_\mu+1}_{k-1}\left({2\over
k+l_\mu}r\right)
$$
where $c(k,l)$ is a constant, which can be uniquely determined by
requiring $c(k,l)>0$ and $\int_0^\infty |R_{kl_\mu}(r)|^2
r^{2n}dr=1$.

We are now ready to state the following remark.
\begin{rmk}\label{rmk1}
1) $\ms H_I$ is the space of square integrable solutions of Eq. $
H\psi = E_I\psi$.

2) As representation of $\frk{so}(2n)$,
\begin{eqnarray}
{\ms H}_I=\left\{\begin{array}{ll} \bigoplus_{l=0}^I
\left(D_l^+\oplus D_l^-\right) & \mbox{if $\mu=1/2$}\\
\\
\bigoplus_{l=0}^I  D_l^0 & \mbox{if $\mu=0$}
\end{array}\right.
\end{eqnarray} where $D_{l}^\pm=\mr{span}\{\psi_{(I-l+1)l\bf m}^\pm \mid
{\bf m}\in {\mathcal I}^\pm(l)\}$ is the highest weight module with
highest weight $(l+1/2, 1/2, \cdots,1/2, \pm 1/2)$, and
$D_{l}^0=\mr{span}\{\psi_{(I-l+1)l\bf m}^0 \mid {\bf m}\in {\mathcal
I}^0(l)\}$ is the highest weight module with highest weight $(l, 0,
\cdots, 0)$.

3) $\{{\ms H}_I \mid I= 0, 1, 2, \ldots\}$ is the complete set of
(bound) energy eigenspaces.

\end{rmk}

For the completeness of this review, we state part of Theorem 2 from
Ref. \cite{meng05} below:
\begin{Th} Let $n>1$ be an integer and $\mu=0$ or $1/2$. For the $2n$-dimensional generalized
MICZ-Kepler problem with magnetic charge $\mu$, the following
statements are true:

1) The negative energy spectrum is
$$
E_I=-{1/2\over (I+n+\mu-1/2)^2}
$$ where $I=0$, $1$, $2$, \ldots;

2) The Hilbert space $\ms H(\mu)$ of negative-energy states admits a
linear $\mr{Spin}(2n+1)$-action under which there is a decomposition
$$
{\ms H}(\mu)=\hat\bigoplus _{I=0}^\infty\,{\ms H}_I
$$ where ${\ms H}_I$ is the irreducible $\mr{Spin}(2n+1)$-module
with highest weight $(I+\mu,\mu, \cdots, \mu)$;

3) The linear action in part 2) extends the manifest linear action
of $\mr{Spin}(2n)$, and ${\ms H}_I$ in part 2) is the energy
eigenspace with eigenvalue $E_I$ in part 1).
\end{Th}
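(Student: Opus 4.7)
My plan is to prove the theorem in three stages: (i) solve the radial Schr\"{o}dinger equation to establish the spectrum in part 1); (ii) construct an $\frk{so}(2n+1)$-action on the bound-state Hilbert space $\ms H(\mu)$ using the Runge-Lenz-type generators of Theorem \ref{keyr}, suitably rescaled so as to preserve each energy eigenspace; and (iii) identify the resulting $\mr{Spin}(2n+1)$-module on each $\ms H_I$ via the standard $\mr{Spin}(2n+1)\downarrow\mr{Spin}(2n)$ branching rule.

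For (i), separating $\psi(r,\Omega) = R(r)Y(\Omega)$ in polar coordinates, the angular factor lies in one of the $\mr{Spin}(2n)$-isotypic components $\ms R_l^\sigma$ of (\ref{brch0}), on which the twisted sphere-Laplacian acts as the scalar given by the $\frk{so}_0(2n)$-Casimir eigenvalue of $\ms R_l^\sigma$; this combines with the $c/r^2$ term in $H$ to produce the centrifugal potential in (\ref{rSchEq}). With the ansatz $R(r) = r^{-n+1/2}\,y(r)\,e^{-r/\nu}$ indicated in the paper, the radial equation reduces to an associated Laguerre equation, and $L^2(r^{2n-1}\,dr)$-integrability forces $y = c(k,l)\,r^{l_\mu+1} L^{2l_\mu+1}_{k-1}(2r/\nu)$ with $k \ge 1$ and the quantization $\nu = k + l_\mu$; this gives $E_I = -1/[2(I + n + \mu - 1/2)^2]$ upon setting $I = k+l-1$.

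For (ii) and (iii), the relevant dynamical generators are $\{J_{AB} : 1 \le A, B \le 2n+1\} = \{J_{\alpha\beta}\} \cup \{A_\alpha\}$. Restricting (\ref{cmtr}) to those indices (where $\eta_{AB} = -\delta_{AB}$) yields the defining brackets of the compact real form $\frk{so}(2n+1)$, so this subset closes as a Lie algebra. Following the classical Kepler pattern of rescaling the Runge-Lenz vector by $(-2H)^{-1/2}$, one transfers this $\frk{so}(2n+1)$-action to $\ms H(\mu)$ so that it commutes with $H$ --- in the present setting this is most cleanly done via the twist equivalence $\tau : \ms H \to \tilde{\ms H}$ of section \ref{Rep}. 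With the action preserving each finite-dimensional $\ms H_I$, it integrates to a $\mr{Spin}(2n+1)$-action, and the $\mr{Spin}(2n)$-content of $\ms H_I$ from Remark \ref{rmk1} then matches term-by-term the interlacing branching of the $\mr{Spin}(2n+1)$-irreducible $V_I$ with highest weight $(I+\mu, \mu, \ldots, \mu)$: one finds precisely the $\mr{Spin}(2n)$ weights $(l+\mu, \mu, \ldots, \mu, \pm\mu)$ for $0 \le l \le I$ when $\mu = 1/2$, and $(l, 0, \ldots, 0)$ for $0 \le l \le I$ when $\mu = 0$. Checking that the top vector of $D_I^+$ (resp.\ $D_I^0$) is annihilated by the $\frk{so}(2n+1)$-raising operators built from $A_\alpha$ then identifies $\ms H_I \simeq V_I$, giving part 2); part 3) follows because the embedding $\frk{so}(2n) \subset \frk{so}(2n+1)$ coincides with the manifest rotational symmetry.

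The main obstacle lies in step (ii): reconciling the un-rescaled bracket relations of Theorem \ref{keyr} with the need to preserve each $H$-eigenspace. The naive generator $A_\alpha$ does not commute with $H$ on $C^\infty(\mathcal S^{2\mu})$ --- a short computation using $\Gamma_{-1}+\Gamma_{D+1} = 2rH+2$ shows that $r[A_\alpha, H]$ is proportional to $\Gamma_\alpha(H+1/2)$, which does not vanish on generic $\ms H_I$ --- so some rescaling is essential; carrying it out cleanly in the spinor-twisted setting is the technical heart of the matter and is exactly what the twist $\tau$ of section \ref{Rep} is designed to handle. The field-strength identities of Lemma \ref{lemma}, especially (\ref{anotheralgReason}) with its sensitivity to $\mu \in \{0, 1/2\}$, are what ultimately guarantee that the compatibility computations work out.
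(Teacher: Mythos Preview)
The paper does not actually prove this theorem: it is quoted as a review item from Ref.~\cite{meng05} (see the sentence preceding the theorem statement), so there is no in-paper proof to compare against directly. That said, your strategy is correct and in fact coincides with how the paper itself re-derives the same content in the twisted picture in Proposition~\ref{prop2}, part 1): the subalgebra $\frk{s}=\mr{span}\{M_{AB}:1\le A<B\le 2n+1\}\cong\frk{so}(2n+1)$ commutes with $H_0=M_{-1,0}$, so after the twist $\tau$ (which trades the $H$-eigenspaces $\ms H_I$ for the $\hat\Gamma_{-1}$-eigenspaces $\tilde{\ms H}_I$) it preserves each level, and the $(\mr{Spin}(2n+1),\mr{Spin}(2n))$ branching rule then forces the highest weight $(I+\mu,\mu,\ldots,\mu)$. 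The only minor difference is that the paper argues uniqueness purely from the branching decomposition (\ref{brch1}) rather than by checking annihilation of a top vector by the $A_\alpha$-raising operators; the branching argument is slightly quicker and avoids any explicit computation with the Runge--Lenz operators.
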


It was shown in Ref. \cite{meng05} that the bound eigen-states are
precisely the ones with negative energy eigenvalues.

\subsection{Twisting} As we said before, because of the technical
intricacy, we need to introduce the notion of twisting. Let us start
with the listing of some important spaces used later:
\begin{itemize}
\item $\ms H_I$ --- the $I$-th bound energy eigenspace;
\item $\mathcal H$ --- the algebraic direct sum of all bound energy eigenspaces;
\item $\ms H$ or $\ms H(\mu)$ --- the completion of $\mathcal H$ under the standard $L^2$-norm;
\item ${\mathcal H}_{l\bf m}^\sigma$ --- the subspace of $\mathcal H$ spanned by $\{\psi_{kl\bf m}^\sigma\,|\, k\ge
1,\; \mbox{$l$, $\bf m$, $\sigma$ fixed}\}$;
\item ${\ms H}_{l\bf m}^\sigma$ --- the completion of ${\mathcal H}_{l\bf
m}^\sigma$ under the standard $L^2$-norm.
\end{itemize}
Note that these spaces are all endowed with the unique hermitian
inner product which yields the standard $L^2$-norm, i.e.,
\begin{eqnarray}
\fbox{$\langle\psi,\phi\rangle:=\displaystyle\int_{{\bb
R}^{D}_*}(\psi, \phi)\, d^{D}x$}
\end{eqnarray} where $(\psi, \phi)$ is the point-wise hermitian inner product
and $d^Dx$ is the Lebesgue measure.

It is clear from the previous section that
\begin{eqnarray}
{\ms B}:=\left\{
\begin{array}{ll} \{\psi_{kl\bf m}^\sigma \mid k\ge 1, {\bf m}\in {\mathcal I}^\sigma(l),
l\ge 0, \sigma\in\{+,-\}\} &
\mbox{if $\mu=1/2$}\\
\\
\{\psi_{kl\bf m}^0\mid k\ge 1, {\bf m}\in {\mathcal I}^0(l), l\ge
0\} & \mbox{if $\mu=0$}
\end{array}\right.
\end{eqnarray} is an orthonormal basis for both ${\mathcal H}$ and $\ms H$.

To study the action of $\hat J_{AB}$'s, we need to ``twist" $\ms B$,
$\ms H_I$, ${\mathcal H}_{l\bf m}^\sigma$, ${\ms H}_{l\bf
m}^\sigma$, $\mathcal H$ and $\ms H$ to get $\tilde{\ms B}$, $\tilde
{\ms H}_I$, $\tilde{\mathcal H}_{l\bf m}^\sigma$, $\tilde{\ms
H}_{l\bf m}^\sigma$, $\tilde {\mathcal H}$ and $\tilde {\ms H}$
respectively. It suffices to twist the elements of ${\ms B}$. Let
$\tau$: $\ms B\to\tilde{\ms B}$ be defined as follows:
\begin{eqnarray}\label{psitilde}
\tau(\psi_{kl\bf m}^\sigma)(r,
\Omega)&:=&(k+l_\mu)\,e^{-i\theta_{k+l_\mu}\hat T}\left({1\over
\sqrt r}\psi_{kl\bf m}^\sigma(r, \Omega)\right)\cr  &=&
(k+l_\mu)^{n+1/2}{1\over \sqrt r}\psi_{kl\bf m}^\sigma((k+l_\mu)r,
\Omega)\cr & \varpropto & r^{l +\mu-{1\over
2}}\,L^{2l_\mu+1}_{k-1}(2r)\,e^{-r}\,Y_{l\bf m}^\sigma(\Omega)
\end{eqnarray}
where $\hat T={1\over \sqrt r} T \sqrt r$, and $\theta_I=-\ln I$ for
any positive number $I$. For simplicity, we write $\tau(\psi_{kl\bf
m}^\sigma)$ as $\tilde\psi_{kl\bf m}^\sigma$. One can check that
$$
\int_{{\bb R}^{D}_*}(\tilde \psi_{kl\bf m}^\sigma, \tilde
\psi_{kl\bf m}^\sigma)\, d^{D}x=\int_{{\bb R}^{D}_*}(\psi_{kl\bf
m}^\sigma, \psi_{kl\bf m}^\sigma)\, d^{D}x=1.
$$

By using Eq. (\ref{psitilde}) and the orthogonality identities for
the generalized Laguerre polynomials, one can see that
$\tilde\psi_{kl\bf m}^\sigma$ is orthogonal to $\tilde\psi_{k'l\bf
m}^\sigma$ when $k\neq k'$.

It is now clear how to twist all the relevant spaces listed in the
beginning of this subsection. For example,
\begin{eqnarray}
{\tilde {\ms H}}_I:=\left\{\exp(-i\theta_{I_\mu+1}\hat
T)\left({1\over \sqrt r}\psi\right)\,|\, \psi \in \ms H_I\right\}.
\end{eqnarray}
Since $\ms H_I$ is spanned by $\{\psi_{kl\bf m}^\sigma\mid k+l=I+1,
k\ge 1, {\bf m}\in {\mathcal I}^\sigma(l), l\ge 0,
\sigma\in\{+,-\}\}$ if $\mu=1/2$, or by $ \{\psi_{kl\bf m}^0\mid
k+l=I+1, k\ge 1, {\bf m}\in {\mathcal I}^0(l), l\ge 0\}$ if $\mu=0$,
it follows that $\tilde{\ms H}_I$ is spanned by
$$\{\tilde \psi_{kl\bf m}^\sigma\mid k+l=I+1, k\ge 1, {\bf m}\in
{\mathcal I}^\sigma(l), l\ge 0, \sigma\in\{+,-\}\} \mbox{ if
$\mu=1/2$},$$ or by
$$\{\tilde \psi_{kl\bf m}^0\mid k+l=I+1, k\ge 1, {\bf m}\in
{\mathcal I}^0(l), l\ge 0\} \mbox{ if $\mu=0$}.$$

We shall call $\tilde{\ms H}(\mu)$ the twisted Hilbert space of the
bound states for the $2n$-dimensional generalized MICZ-Kepler
problem with magnetic charge $\mu$. Remark that the twisting
map\footnote{It has a \underline{basis-free} description: for
$\psi\in \ms H_I$, $\tau(\psi)(r, \Omega)=(I_\mu+1)^{n+1/2}{1\over
\sqrt r}\psi((I_\mu+1)r, \Omega)$. }
\begin{eqnarray}\tau:\; {\ms
H}(\mu)\to \tilde{\ms H}(\mu)
\end{eqnarray}
is the unique linear isometry which sends $\psi_{kl\bf m}^\sigma$ to
$\tilde \psi_{kl\bf m}^\sigma$; moreover, $\tau$ maps all relevant
subspaces of ${\ms H}(\mu)$ isomorphically onto the corresponding
relevant twisted subspaces. Note that $\hat J_{\alpha\beta}={1\over
\sqrt r} J_{\alpha\beta} \sqrt r= J_{\alpha\beta}$ obviously acts on
$\tilde {\ms H}_I$ as hermitian operator, so
$\frk{r}:=\mr{span}\{M_{\alpha\beta}\, |\, 1\le \alpha< \beta\le
2n\}=\frk{so}(2n)$  acts unitarily on $\tilde {\ms H}_I$ via $\tilde
\pi$.

Recall that for non negative integer $I$, we use $I_\mu$ to denote
$I+n+\mu-3/2$.
\begin{Prop}\label{Prmk2}
1) $\tilde\psi_{kl\bf m}^\sigma$ is an eigenvector of $\hat
\Gamma_{-1}$ with eigenvalue $k+l_\mu$.

2) $\tilde{\ms H_I}$ is the space of square integrable solutions of
Eq. $\hat\Gamma_{-1}\psi = (I_\mu+1)\psi$.

3) $\hat\Gamma_{-1}$ is a self-adjoint operator on $\tilde{\ms
H}(\mu)$ and $\tilde{\ms H_I}$ is the eigenspace of
$\hat\Gamma_{-1}$ with eigenvalue $I_\mu+1$.

4) As representation of $\frk{r}$,
\begin{eqnarray}
\tilde {\ms H}_I=\left\{ \begin{array}{ll}\bigoplus_{l=0}^I
\left(\tilde D_l^+\oplus \tilde D_l^-\right) & \mbox{if $\mu=1/2$}\\
\\
\bigoplus_{l=0}^I \tilde D_l^0 & \mbox{if $\mu=0$}
\end{array}\right.
\end{eqnarray} where $\tilde D_l^\pm=\mr{span}\{\tilde \psi_{(I-l+1)l\bf m}^\pm\mid
{\bf m}\in {\mathcal I}^\pm(l)\}$ is the highest weight module with
highest weight $(l+1/2, 1/2, \cdots, 1/2,\pm 1/2)$, and $\tilde
D_l^0=\mr{span}\{\tilde \psi_{(I-l+1)l\bf m}^0\mid {\bf m}\in
{\mathcal I}^0(l)\}$ is the highest weight module with highest
weight $(l, 0, \cdots, 0)$.

5) $\tilde{\ms H}(\mu)= L^2({\mathcal S}^{2\mu})$.

\end{Prop}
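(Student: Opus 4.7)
The plan is to prove the five parts in order, using as the main computational input the explicit formula
\begin{equation*}
\tilde\psi_{kl\mathbf m}^\sigma(r,\Omega)\varpropto r^{\,l+\mu-1/2}\,L_{k-1}^{2l_\mu+1}(2r)\,e^{-r}\,Y_{l\mathbf m}^\sigma(\Omega)
\end{equation*}
together with classical facts about generalized Laguerre polynomials. For part (1) I would rewrite $\hat\Gamma_{-1}=\frac{1}{2}(r^{1/2}\pi^2 r^{1/2}+r+c/r)$ and exploit that $\pi^2$ acts on $f(r)Y_{l\mathbf m}^\sigma$ as a purely radial operator plus a constant angular eigenvalue already identified by the radial Schr\"odinger equation (\ref{rSchEq}); after the substitution $u=2r$, the remaining radial identity is precisely the Laguerre equation for $L_{k-1}^{2l_\mu+1}(u)$, with the required eigenvalue $k+l_\mu$ dropping out by inspection. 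Part (2) is then immediate bookkeeping: $k+l_\mu=I_\mu+1$ iff $k+l=I+1$, so the twisted basis of $\tilde{\ms H}_I$ is exactly the set of eigenvectors at eigenvalue $I_\mu+1$, and square-integrability transfers from ${\ms H}_I$ through the isometry $\tau$.

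For part (4), it suffices to apply $\tau$ to the decomposition in Remark \ref{rmk1}(2) and note that the building blocks of $\tau$, namely multiplication by $r^{-1/2}$ and the unitary $e^{-i\theta_{I_\mu+1}\hat T}$, both commute with each $J_{\alpha\beta}$: the identity $[J_{\alpha\beta},r]=0$ is a direct check from the explicit expression for $J_{\alpha\beta}$, while $[T,J_{\alpha\beta}]=0$ follows from (\ref{cmtr}) because $\eta_{\alpha,-1}=\eta_{\alpha,D+1}=0$. For part (3), the operator $\hat\Gamma_{-1}=\frac{1}{2}(r^{1/2}\pi^2 r^{1/2}+r+c/r)$ is manifestly symmetric on smooth compactly supported sections, and once part (5) is in hand, $\{\tilde\psi_{kl\mathbf m}^\sigma\}$ is an orthonormal eigenbasis of $\tilde{\ms H}(\mu)$, which forces essential self-adjointness and gives the eigenspace identification $\tilde{\ms H}_I=\ker(\hat\Gamma_{-1}-(I_\mu+1))$.

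The main obstacle is part (5). Fixing $l,\mathbf m,\sigma$ and using $d^{2n}x=r^{2n-1}\,dr\,d\Omega$ together with the arithmetic identity $2(l+\mu-1/2)+(2n-1)=2l_\mu+1$, the substitution $u=2r$ converts the radial $L^2$ inner product of $\tilde\psi_{kl\mathbf m}^\sigma$ and $\tilde\psi_{k'l\mathbf m}^\sigma$ into a constant multiple of
\begin{equation*}
\int_0^\infty u^{2l_\mu+1}e^{-u}\,L_{k-1}^{2l_\mu+1}(u)\,L_{k'-1}^{2l_\mu+1}(u)\,du,
\end{equation*}
i.e., the standard inner product of $L^2((0,\infty),u^{2l_\mu+1}e^{-u}du)$. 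Classical completeness of $\{L_{k-1}^{2l_\mu+1}\}_{k\ge 1}$ in this weighted $L^2$ space then yields $\tilde{\ms H}_{l\mathbf m}^\sigma=L^2((0,\infty),r^{2n-1}dr)\otimes\bb C\,Y_{l\mathbf m}^\sigma$; summing over $(l,\mathbf m,\sigma)$ and combining with the branching decomposition (\ref{brch0}) of $L^2({\mathcal S}^{2\mu}|_{\mr S^{2n-1}})$ gives $\tilde{\ms H}(\mu)=L^2({\mathcal S}^{2\mu})$. The crux here is that the twist has \emph{homogenized} the argument of the Laguerre polynomials to the universal $2r$; the untwisted $R_{kl_\mu}$ have $k$-dependent argument $\tfrac{2r}{k+l_\mu}$, so classical Laguerre completeness does not apply directly to ${\ms H}(\mu)$, which is exactly why $\tau$ was introduced.
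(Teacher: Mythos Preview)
Your proposal is correct, and for parts (3)--(5) it follows essentially the same route as the paper (you are somewhat more explicit in part (4) about why $\tau$ intertwines the $\frk r$-actions). The genuine difference is in part (1). The paper does not touch the Laguerre ODE at all: it starts from $H\psi_{kl\mathbf m}^\sigma=E_{k+l-1}\psi_{kl\mathbf m}^\sigma$, multiplies through by $\sqrt r$ to rewrite this as a linear combination of $\hat\Gamma_{-1}$ and $\hat\Gamma_{D+1}$ acting on $r^{-1/2}\psi_{kl\mathbf m}^\sigma$, and then conjugates by $e^{-i\theta_{k+l_\mu}\hat T}$; the hyperbolic-rotation identities for $\hat T$ against $\hat\Gamma_{-1},\hat\Gamma_{D+1}$, together with the specific choice $\theta_{k+l_\mu}=-\ln(k+l_\mu)$, eliminate the $\hat\Gamma_{D+1}$ term and yield $\hat\Gamma_{-1}\tilde\psi_{kl\mathbf m}^\sigma=(k+l_\mu)\tilde\psi_{kl\mathbf m}^\sigma$ directly. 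This Barut--Bornzin argument explains \emph{why} $\tau$ was defined as it was; your direct ODE verification is equally valid but treats the form of $\tau$ as a given.

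One small gap: your part (2) only establishes that $\tilde{\ms H}_I$ is contained in the space of square-integrable solutions of $\hat\Gamma_{-1}\psi=(I_\mu+1)\psi$; the reverse inclusion is not mere bookkeeping. The paper obtains it because the Barut--Bornzin process is reversible, so any $L^2$ solution untwists to an $L^2$ solution of $H\psi=E_I\psi$ and hence lies in $\ms H_I$ by Remark~\ref{rmk1}(1). In your framework you should either argue directly via the radial ODE (the $L^2$ constraint singles out the Laguerre solution uniquely in each angular sector) or, more cleanly, defer to part (5): once $\tilde{\ms B}$ is an orthonormal basis of $L^2({\mathcal S}^{2\mu})$, any $L^2$ eigenvector of $\hat\Gamma_{-1}$ must lie in the span of basis vectors sharing its eigenvalue.
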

\begin{proof}
1) The proof is based on the ideas from Ref. \cite{Barut71}. Since
\begin{eqnarray}\label{eigeneq1}
H\psi_{kl\bf m}^\sigma=E_{k+l-1}\psi_{kl\bf m}^\sigma,
\end{eqnarray}
we have $ \sqrt r(H-E_{k+l-1})\psi_{kl\bf m}^\sigma=0$ which can be
rewritten as
$$
({1\over 2}\hat X-1-E_{k+l-1}\hat Y)({1\over \sqrt r}\psi_{kl\bf
m}^\sigma)=0
$$ where $X$ and $Y$ are given by Eq. (\ref{def1}). In terms
of $\hat \Gamma_{-1}$ and $\hat \Gamma_{D+1}$, we can recast the
above equation as
$$
\left(({1\over 2}-E_{k+l-1})\hat \Gamma_{-1}+({1\over
2}+E_{k+l-1})\hat \Gamma_{D+1}-1\right)({1\over \sqrt r}\psi_{kl\bf
m}^\sigma)=0
$$
Plugging $\psi_{kl\bf m}^\sigma={1\over k+l_\mu}\sqrt r
e^{i\theta_{k+l_\mu}\hat T}\left(\tilde \psi_{kl\bf
m}^\sigma\right)$ into the above equation and using identities
\begin{eqnarray}\left\{
\begin{array}{rcl} e^{-i\theta \hat T}\,\hat\Gamma_{-1}\,e^{i\theta \hat T} & = &
\cosh\theta\,\hat\Gamma_{-1}+\sinh\theta\,\hat\Gamma_{D+1}\\
\\
e^{-i\theta \hat T}\,\hat\Gamma_{D+1}\,e^{i\theta \hat T} & = &
\sinh\theta\,\hat\Gamma_{-1}+
\cosh\theta\,\hat\Gamma_{D+1},\end{array}\right.
\end{eqnarray}
we arrive at the following equation:
\begin{eqnarray}\label{eigeneq2}
\fbox{$\hat \Gamma_{-1}\tilde\psi_{kl\bf
m}^\sigma=(k+l_\mu)\tilde\psi_{kl\bf m}^\sigma$}\,.
\end{eqnarray}

2) Note that the Barut-Bornzin process going from Eq.
(\ref{eigeneq1}) to Eq. (\ref{eigeneq2}) is completely reversible.
Therefore, part 2) is just a consequence of part 1) of Remark
\ref{rmk1}.

3) Note that $\hat \Gamma_{-1}$ is defined on the dense linear
subspace $\tilde{\mathcal H}$ of $\tilde{\ms H} (\mu)$. It is easy
to check that $$\langle\tilde\psi_{k'l'\bf m'}^\sigma, \hat
\Gamma_{-1}\tilde\psi_{kl\bf m}^\sigma\rangle=\langle\hat
\Gamma_{-1}\tilde\psi_{k'l'\bf m'}^\sigma, \tilde\psi_{kl\bf
m}^\sigma\rangle$$ for any $\tilde\psi_{kl\bf m}^\sigma$ and
$\tilde\psi_{k'l'\bf m'}^\sigma$. Therefore, $\hat \Gamma_{-1}$ (To
be precise, it should be its closure) is a self-adjoint operator on
$\tilde{\ms H} (\mu)$. In view of part 2), $\tilde{\ms H}_I$ is the
eigenspace of $\hat\Gamma_{-1}$ with eigenvalue $I_\mu+1$.

4) This part is clear due to part 2) of Remark \ref{rmk1}.

5) Recall that $\tilde\psi_{kl\bf m}^\sigma(r,\Omega)=\tilde
R_{kl_\mu}(r)\,Y_{l\bf m}^\sigma(\Omega)$ where $\tilde
R_{kl_\mu}(r)\propto r^{l +\mu-{1\over
2}}\,L^{2l_\mu+1}_{k-1}(2r)\,e^{-r}$.  By the well-known property
for the generalized Laguerre polynomials, for any $l\ge 0$,
$\{\tilde R_{kl_\mu}\}_{k=1}^\infty$ form an orthonormal basis for
$L^2(\bb R_+, r^{2n-1}\,dr)$.

By virtue of Theorem II. 10 of Ref. \cite{Reed&Simon} and Eq.
(\ref{brch0}),
\begin{eqnarray}
L^2({\mathcal S}^{2\mu}) & = & L^2(\bb R_+, r^{2n-1}\,dr)\otimes
L^2({\mathcal S}^{2\mu}\mid_{\mr{S}^{2n-1}})\cr & = &\left\{
\begin{array}{ll} \hat \bigoplus_{l=0}^\infty L^2(\bb R_+,
r^{2n-1}\,dr)\otimes \left({\ms R}_l^+\oplus {\ms R}_l^-\right) &
\mbox{if
$\mu=1/2$}\\
\\
\hat \bigoplus_{l=0}^\infty  L^2(\bb R_+, r^{2n-1}\,dr)\otimes {\ms
R}_l^0 & \mbox{if $\mu=0$.}\end{array}\right.\nonumber
\end{eqnarray} Therefore, $\tilde {\ms B}$ is an orthonormal basis
for $L^2({\mathcal S}^{2\mu})$, consequently $\tilde{\ms H}(\mu)=
L^2({\mathcal S}^{2\mu})$.

\end{proof}
We end this subsection with
\begin{rmk}\label{rmk2}
$\tilde{\ms H}_I$ is the eigenspace of $\tilde\pi(H_0)$ with
eigenvalue $-(I_\mu+1)$. Here $\tilde\pi(H_0)=-\hat \Gamma_{-1}$ is
viewed as an endomorphism of $\tilde {\mathcal H}$.
\end{rmk}

\section{Representation theoretical aspects --- the final part}
For the remainder of this paper, we assume the magnetic charge $\mu$
is either $0$ or $1/2$. We start with some notations:
\begin{itemize}
\item $G=\widetilde{\mr{Spin}}(2, 2n+1)$ --- the 4-fold
cover of $\mr{SO}_0(2, 2n+1)$ characterized by the homomorphism
$\pi_1(\mr{SO}_0(2, 2n+1))=\bb Z\oplus \bb Z_2\to \bb Z_2\oplus \bb
Z_2$ sending $(a, b)$ to $(\bar a, b)$;
\item $\frk{g}_0$ --- the Lie algebra of $\widetilde{\mr{Spin}}(2, 2n+1)$;
\item $\frk{g}$ --- the complexfication of $\frk{g}_0$, so $\frk{g}=\frk{so}(2n+3)$;
\item $H_0$ --- defined to be $M_{-1, 0}$;
\item $H_j$ --- defined to be $-M_{2j-1, 2j}$ for $1\le j\le n$;
\item $K:=\mr{Spin}(2)\times\mr{Spin}(2n+1)$ --- a maximal compact subgroup of $\widetilde{\mr{Spin}}(2, 2n+1)$;
\item $\frk{k}_0$ --- the Lie algebra of $K$;
\item $\frk{k}$ --- the complexfication of $\frk{k}_0$, so $\frk{k}=\frk{so}(2)\oplus\frk{so}(2n+1)$;
\item $\frk{r}$ --- the subalgebra of $\frk{g}$ generated by
$\{M_{AB}\,|\, 1\le A < B \le 2n\}$, so $\frk{r}_0:=\frk{g}_0\cap
\frk{r}=\frk{so}_0(2n)$;
\item $\frk{s}$ --- the subalgebra of $\frk{g}$ generated by  $\{M_{AB}\,|\, 1\le A < B
\le 2n+1\}$, so $\frk{s}_0:=\frk{g}_0\cap \frk{s}=\frk{so}_0(2n+1)$;
\item $\frk{sl}(2)$ --- the subalgebra of $\frk{g}$ generated by  $M_{-1, D+1}$, $M_{0, D+1}$ and $M_{-1,0}$, so $\frk{sl}_0(2):=\frk{g}_0\cap
\frk{sl}(2)=\frk{so}_0(2,1)$;
\item $U(\frk{sl}(2))$ --- the universal enveloping algebra of
$\frk{sl}(2)$.
\end{itemize}

\subsection{$\tilde{\mathcal H}$ is a unitary highest weight
Harish-Chandra module} The goal of this subsection is to show that
$(\tilde\pi, \tilde{\mathcal H})$ is a unitary highest weight
$(\frk{g}, K)$-module.

\begin{Prop}\label{prop1}
1) Each $\tilde\pi(M_{AB})$ maps $\tilde{\mathcal H}$ into
$\tilde{\mathcal H}$, so $(\tilde \pi, \tilde{\mathcal H})$ is a
representation of $\frk{g}$.

2) Each $\tilde\pi(M_{AB})$ is a hermitian operator on
$\tilde{\mathcal H}$, so $(\tilde \pi, \tilde{\mathcal H})$ is a
unitary representation of $\frk{g}$.

3) $(\tilde \pi|_{\frk{sl}(2)}, \tilde{\mathcal H}_{l\bf m}^\sigma)$
is the discrete series representation of $\frk{sl}(2)=\frk{so}(2,1)$ with
highest weight $-l_\mu-1$.

\end{Prop}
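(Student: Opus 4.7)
The plan is to prove part 3 first by a direct computation on the Laguerre-polynomial basis, and then to deduce parts 1 and 2 from the $\hat\Gamma_{-1}$-eigendecomposition that part 3 makes available.

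For part 3, I first observe that $\hat\Gamma_{-1}$, $\hat\Gamma_{D+1}$, and $\hat T$ act only on the radial variable: $\Gamma_{\pm}$ are built from $r\pi^2$, $r$, and $c/r$, while $T=\vec r\cdot\vec\pi - i(D-1)/2$ is essentially the dilation $r\partial_r$; on a section of the form $R(r)Y_{l\bf m}^\sigma(\Omega)$ the angular content of $\pi^2$ reduces to a scalar depending only on $l$ and $\mu$ because $Y_{l\bf m}^\sigma$ lies in a single $\frk{so}_0(2n)$-irreducible and the twisted angular operator is effectively a Casimir. Hence $\tilde\pi|_{\frk{sl}(2)}$ preserves $\tilde{\mathcal H}_{l\bf m}^\sigma$, and by Proposition \ref{Prmk2}(1) the $M_{-1,0}$-spectrum there is $\{-(k+l_\mu):k\geq 1\}$, bounded above by $-(l_\mu+1)$. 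The commutation relations in Theorem \ref{keyr}(1) give $[\Gamma_{-1},\Gamma_{D+1}\pm iT]=\pm(\Gamma_{D+1}\pm iT)$, so $\hat\Gamma_{D+1}-i\hat T$ and $\hat\Gamma_{D+1}+i\hat T$ are $\pm 1$-ladder operators for $\hat\Gamma_{-1}$. The crux is then to verify that $\tilde\psi_{1l\bf m}^\sigma$ is annihilated by the raising operator; substituting the explicit expression with $L^{2l_\mu+1}_0\equiv 1$ and using the radial form of $\pi^2$ together with the lowest-index Laguerre identity should reduce this to $0$. This identifies $\tilde{\mathcal H}_{l\bf m}^\sigma$ with the irreducible anti-holomorphic discrete series $\mathcal D^-_{2l_\mu+2}$ of highest weight $-l_\mu-1$; unitarity reduces to formal hermiticity of $\hat\Gamma_{-1}$ (Proposition \ref{Prmk2}(3)), $\hat\Gamma_{D+1}$, and $\hat T$, standard integration by parts on $L^2(\bb R_+, r^{2n-1}\,dr)$.

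For part 1, combine part 3 with Proposition \ref{Prmk2}(3) to see that $\tilde{\mathcal H}=\bigoplus_I \tilde{\ms H}_I$ is the algebraic direct sum of the finite-dimensional eigenspaces of the self-adjoint operator $\hat\Gamma_{-1}$ on $L^2({\mathcal S}^{2\mu})$. From Theorem \ref{keyr}(1) compute $\mr{ad}(\hat\Gamma_{-1})$ on the remaining generators: the angular $\hat J_{\alpha\beta}$ have weight $0$, while the combinations $\hat\Gamma_\mu\pm i\hat M_\mu$ and $\hat\Gamma_{D+1}\pm i\hat T$ (and the corresponding combinations of $\hat A_\mu$) have weights $\pm 1$. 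Since each $\hat J_{AB}$ is a second-order differential operator with polynomial and $1/r$ coefficients, it maps the Schwartz-type sections in $\tilde{\ms H}_I$ into $L^2({\mathcal S}^{2\mu})$; by the weight-shift argument its image lies in $\tilde{\ms H}_{I-1}\oplus \tilde{\ms H}_I\oplus \tilde{\ms H}_{I+1}\subset\tilde{\mathcal H}$.

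For part 2, hermiticity is checked generator by generator. The angular $\hat J_{\alpha\beta}$ are manifestly hermitian (standard twisted angular momentum); hermiticity of $\hat\Gamma_{-1}$ is in Proposition \ref{Prmk2}(3); the same integration-by-parts argument, together with $x_\mu\mathcal A_\mu=0$ from Lemma \ref{lemma}, gives hermiticity of $\hat\Gamma_{D+1}$, $\hat T$, and $\hat\Gamma_\mu$; the Runge-Lenz-type operators $\hat A_\mu,\hat M_\mu$ arise as $\pm i$-commutators of hermitian operators, so are also hermitian. The main obstacle is the highest-weight verification in part 3: the cancellation in $(\hat\Gamma_{D+1}-i\hat T)\tilde\psi_{1l\bf m}^\sigma=0$ requires the precise form of the $c/(2r)$-term and of the twisted angular-Laplacian contribution to align with the Laguerre parameter $\alpha=2l_\mu+1$, and the verification must proceed with careful bookkeeping because $\mu$, $n$, and the spin-bundle twist all enter the same radial equation.
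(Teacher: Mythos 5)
Your overall architecture matches the paper's: part 1 via the $\mathrm{ad}(\hat\Gamma_{-1})$-weight decomposition of the generators plus square-integrability plus Proposition \ref{Prmk2}(2), part 2 by checking hermiticity of the building blocks $\pi_\alpha$, $r$, $1/r$, $\sqrt r$ and propagating it through the definitions (\ref{def1})--(\ref{def2}), and part 3 via the $\frk{sl}(2)$ ladder on the radial index $k$. However, the one step you single out as ``the crux'' and ``the main obstacle'' --- that the raising operator annihilates $\tilde\psi_{1l\bf m}^\sigma$ --- is exactly the step you do not carry out: you say the Laguerre computation ``should reduce this to $0$'' with ``careful bookkeeping,'' which is a promissory note, not a proof. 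Moreover, no such computation is needed, and the machinery you have already assembled delivers the conclusion for free. By your own ladder relation, $(\hat\Gamma_{D+1}-i\hat T)\tilde\psi_{1l\bf m}^\sigma$ is an eigenvector of $\hat\Gamma_{-1}$ with eigenvalue $l_\mu=0_\mu+1-1$, i.e.\ strictly below the bottom $n+\mu-1/2$ of the spectrum of the self-adjoint operator $\hat\Gamma_{-1}$ on $L^2({\mathcal S}^{2\mu})$; since the image is square integrable (same power-counting near $0$ and exponential decay near $\infty$ that you invoke in part 1), it must vanish. This is precisely how the paper argues (``In view of part 2) of Proposition \ref{Prmk2}\dots must be proportional to $\tilde\psi_{(k\mp1)l\bf m}^\sigma$,'' with the convention $\tilde\psi_{0l\bf m}^\sigma=0$), and you should replace your unfinished computation by this spectral argument.

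Two smaller points. First, in part 1 you list ``the corresponding combinations of $\hat A_\mu$'' among the weight-$\pm1$ operators; in fact $[\hat\Gamma_{-1},\hat A_\mu]=0$ because $M_{-1,0}$ and $M_{\mu,D+1}$ share no index, so $\hat A_\mu$ has weight $0$. This is harmless for your conclusion (the image still lands in $\tilde{\ms H}_{I-1}\oplus\tilde{\ms H}_I\oplus\tilde{\ms H}_{I+1}$), but the bookkeeping is off. Second, in part 3 you assert that the ladder structure ``identifies'' $\tilde{\mathcal H}_{l\bf m}^\sigma$ with the discrete series; you still need to know that the cyclic module $U(\frk{sl}(2))\cdot\tilde\psi_{1l\bf m}^\sigma$ exhausts $\tilde{\mathcal H}_{l\bf m}^\sigma$ (equivalently, that the lowering operator never kills $\tilde\psi_{kl\bf m}^\sigma$). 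The paper settles this by comparing dimensions of the intersections with each $\tilde{\ms H}_I$, using that a nontrivial unitary highest weight module of $\frk{sl}_0(2)$ with highest weight $-(l_\mu+1)<0$ is automatically the irreducible discrete series; you should add that step.
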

\begin{proof} 1) We follow the convention of Ref. \cite{georgi82} for describing the root space of $\frk{g}=\frk{so}(2n+3)$.
Take as a basis of the Cartan sub-algebra of $\frk{g}$ the following
elements:
$$
H_0=M_{-1, 0},\quad H_j=-M_{2j-1, 2j},\quad \mbox{$j=1, \cdots, n$}.
$$
Let $\eta, \eta'=\pm 1$. We take the following root vectors:
\begin{eqnarray}
E_{\eta e^j+\eta' e^k} &= &{1\over 2}\left(M_{2j-1, 2k-1}+i\eta
M_{2j, 2k-1}+i\eta'M_{2j-1, 2k}-\eta\eta' M_{2j, 2k}\right)\cr
& &\quad\quad\quad
\mbox{for $0\le j<k\le n$},\cr E_{\eta e^j} & = & {1\over \sqrt
2}\left(M_{2j-1, 2n+1}+i\eta M_{2j, 2n+1}\right)\quad \mbox{for
$0\le j\le n$}.\nonumber
\end{eqnarray} This way we obtain a Cartan basis for $\frk{g}$.
Therefore, for $\psi_I\in \tilde{\ms H}_I$, we have
\begin{eqnarray}\label{eigeneq}
\tilde \pi(H_0)(\tilde\pi(E_\alpha)(\psi_I))
 & = & (-I_\mu-1+\alpha_0)\tilde\pi(E_\alpha)(\psi_I)\cr & = &
 (-(I-\alpha_0)_\mu-1)\tilde\pi(E_\alpha)(\psi_I)
\end{eqnarray}
where $\alpha_0$ (which can be $0$, or $-1$ or  $1$) is the $0$-th
component of $\alpha$. It is not hard to see that
$\tilde\pi(E_\alpha)(\psi_I)$ is square integrable \footnote{The
convergence of the integral near infinity is clear because of the
exponential decay as $r\to \infty$. The convergence of the integral
near the origin of $\bb R^D_*$ is clear from the counting of powers
of $r$.}, so in view of part 2) of Proposition \ref{Prmk2}, Eq.
(\ref{eigeneq}) implies that $\tilde\pi(E_\alpha)(\psi_I)\in
{\tilde{\ms H}}_{I-\alpha_0}$. (Here ${\ms H}_{-1}=0$.) Therefore,
$\tilde\pi(E_\alpha)$ maps any $\tilde{\ms H}_I$, hence
$\tilde{\mathcal H}$, into $\tilde{\mathcal H}$. By a similar
argument, one can show that $\tilde\pi(H_i)$ maps $\tilde{\mathcal
H}$ into itself. Since $H$'s and $E$'s form a basis for $\frk{g}$,
this implies that $\tilde\pi(M_{AB})$ maps $\tilde{\mathcal H}$ into
itself.

2) It is equivalent to checking that each $\hat J_{AB}:={1\over
{\sqrt r}}J_{AB}\sqrt r$ is an hermitian operator on
$\tilde{\mathcal H}$. First of all, it is not hard to see that, when
${\mathcal O} =\pi_\alpha, r, {1\over r}, \sqrt r, {1\over \sqrt
r}$, we always have
\begin{eqnarray}\label{hermitian}\langle \psi_1, {\mathcal
O}\psi_2\rangle=\langle {\mathcal O}\psi_1,
\psi_2\rangle\end{eqnarray} for any $\psi_1$, $\psi_2$ in
$\tilde{\mathcal H}$. It is equally easy to see that Eq.
(\ref{hermitian}) is always true for any $\psi_1$, $\psi_2$ in
$\tilde{\mathcal H}$ when ${\mathcal O}$ is $\hat\Gamma_\alpha
=\sqrt r \pi_\alpha \sqrt r$, $\hat X =\sqrt r \pi^2 \sqrt r+{c\over
r}$, or $\hat Y=r$. It is then clear from definitions (\ref{def1})
and (\ref{def2}) that Eq. (\ref{hermitian}) is always true for any
$\psi_1$, $\psi_2$ in $\tilde{\mathcal H}$ when ${\mathcal O} =\hat
J_{AB}$.

3) Let us first show that $\tilde\pi(M_{-1, D+1})$, $\tilde\pi(M_{0,
D+1})$ and $\tilde\pi(M_{-1,0})$ map each $\tilde\psi_{kl\bf
m}^\sigma$ into $\tilde{\mathcal H}_{l\bf m}^\sigma$, so they indeed
map $\tilde{\mathcal H}_{l\bf m}^\sigma$ into $\tilde{\mathcal
H}_{l\bf m}^\sigma$. This is obvious for $\tilde\pi(M_{-1,0})$
because $\tilde\pi(M_{-1,0})(\tilde\psi_{kl\bf m}^\sigma)= -\hat
\Gamma_{-1}\tilde\psi_{kl\bf m}^\sigma=-(k+l_\mu)\tilde\psi_{kl\bf
m}^\sigma$. Next, we introduce
$$E_\pm={1\over \sqrt 2}(M_{-1, D+1}\pm iM_{0, D+1}),$$then one can
check from Eq. (\ref{cmtrm}) that $[M_{-1,0}, E_\pm]=\pm E_\pm$.
Therefore
$$
\tilde\pi(M_{-1,0})(\tilde\pi(E_\pm)( \tilde\psi_{kl\bf
m}^\sigma))=(-k-l_\mu\pm 1)\tilde\pi( E_\pm)( \tilde\psi_{kl\bf
m}^\sigma),
$$ where $\tilde\pi( E_\pm)={1\over \sqrt 2}(\hat T\pm i\hat\Gamma_{D+1})$.
It is not hard to see that $ \tilde\pi( E_\pm)( \tilde\psi_{kl\bf
m}^\sigma)$ is square integrable. In view of part 2) of Proposition
\ref{Prmk2}, we conclude that $\tilde \pi( E_\pm)( \tilde\psi_{kl\bf
m}^\sigma)$ must be proportional to $\tilde\psi_{(k\mp 1)l\bf
m}^\sigma$. (Here, by convention, $\tilde\psi_{0l\bf
m}^\sigma=0$.) Therefore, operators $\tilde \pi( E_\pm)$ map
$\tilde\psi_{kl\bf m}^\sigma$ into $\tilde{\mathcal H}_{l\bf
m}^\sigma$. This proves that $(\tilde \pi|_{\frk{sl}(2)},
\tilde{\mathcal H}_{l\bf m}^\sigma)$ is a representation of
$\frk{sl}(2)$.

In view of the fact that $\tilde\pi(M_{-1, 0})(\tilde \psi_{1l\bf
m}^\sigma)=-(l_\mu+1)\tilde \psi_{1l\bf m}^\sigma\neq 0$, we
conclude that $U(\frk{sl}(2))\cdot \tilde \psi_{1l\bf m}^\sigma$ is
a \emph{nontrivial} unitary highest weight representation of the
non-compact real Lie algebra $\frk{sl}_0(2)$, hence must be the
discrete series representation with highest weight $-(l_\mu+1)$.
Since $U(\frk{sl}(2))\cdot \tilde \psi_{1l\bf m}^\sigma\subset
\tilde {\mathcal H}_{l\bf m}^\sigma$, and $$\dim (\tilde{\ms
H}_I\,\cap\,U(\frk{sl}(2))\cdot \tilde \psi_{1l\bf m}^\sigma)=\dim
(\tilde{\ms H}_I\,\cap\, \tilde {\mathcal H}_{l\bf m}^\sigma)$$ for
all $I\ge 0$, we conclude that $U(\frk{sl}(2))\cdot \tilde
\psi_{1l\bf m}^\sigma= \tilde {\mathcal H}_{l\bf m}^\sigma$.
Therefore, $\tilde {\mathcal H}_{l\bf m}^\sigma$ is a unitary
highest weight $\frk{sl}(2)$-module with highest weight $-l_\mu-1$,
which in fact is a unitary highest weight $(\frk{sl}(2),
\mr{Spin}(2))$-module. Then $\tilde {\ms H}_{l\bf m}^\sigma$ must be
the discrete series representation of $\mr {Spin}(2,1)$ with highest
weight $-l_\mu-1$.

\end{proof}

To continue the discussion on representations, we prove the
following proposition.
\begin{Prop}\label{prop2}

1) $(\tilde\pi|_{\frk{s}}, \tilde{\ms H}_I)$ is an irreducible
unitary representation of $\frk{s}$, in fact, it is the highest
weight representation with highest weight $(I+\mu, \mu, \ldots,
\mu)$.

2) The unitary action of $\frk{k}_0$ on $\tilde{\mathcal H}$ can be
lifted to a unique unitary action of $K$ under which
\begin{eqnarray}
\tilde {\mathcal H} =\bigoplus_{l=0}^\infty \left(D(-l_\mu-1)\otimes
D^l\right)
\end{eqnarray} where $D^l$ is the irreducible module of $\mr{Spin}(2n+1)$
with highest weight $(l+\mu, \mu, \cdots, \mu)$ and $D(-l_\mu-1)$ is
the irreducible module of $\mr{Spin}(2)$ with weight $-l_\mu-1$.

3) $\tilde{\mathcal H}$ is a unitary $(\frk{g}, K)$-module.

4) $(\tilde\pi, \tilde{\mathcal H})$ is irreducible; in fact, it is
the unitary highest weight module of $\frk{g}$ with highest weight
$$\left(-(n+\mu-1/2),\mu, \cdots, \mu\right).$$
\end{Prop}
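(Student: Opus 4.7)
The plan is to establish the four parts in sequence, each building on the previous.

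For part (1), the key observation is that the commutation relations (\ref{cmtrm}) give $[M_{-1,0}, M_{AB}] = 0$ for $1 \le A < B \le 2n+1$, so $\tilde\pi(\frk{s})$ commutes with $\hat\Gamma_{-1}$ and therefore preserves each finite-dimensional eigenspace $\tilde{\ms H}_I$. On $\tilde{\ms H}_I$, $\tilde\pi|_{\frk{s}_0}$ is a finite-dimensional hermitian representation of the compact real form $\frk{so}_0(2n+1)$, hence decomposes into irreducible summands. Part (4) of Proposition \ref{Prmk2} records the $\frk{r}$-branching of $\tilde{\ms H}_I$ as $\bigoplus_{l=0}^I \tilde D_l^\sigma$, each summand appearing with multiplicity one. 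The standard $\mr{Spin}(2n+1) \downarrow \mr{Spin}(2n)$ branching rule applied to the irreducible with highest weight $(I+\mu,\mu,\ldots,\mu)$ produces exactly the same list with the same multiplicities, so the $\frk{s}$-module $\tilde{\ms H}_I$ must coincide with this irreducible module.

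Part (2) is then immediate: the locally finite hermitian $\frk{s}_0$-action integrates to a unitary $\mr{Spin}(2n+1)$-action with $I$-th isotypic $D^I$, and the $\frk{so}(2)_0$-action generated by $H_0$ has half-integer or integer eigenvalues $-(I_\mu+1)$, hence lifts to $\mr{Spin}(2)$; since $\tilde\pi(H_0)$ commutes with $\tilde\pi(\frk{s})$, the two integrated actions combine into a $K$-action realizing the claimed decomposition. Part (3) is essentially formal: Proposition \ref{prop1}(1) supplies the $\frk{g}$-module structure, part (2) supplies the $K$-module structure with finite-dimensional $K$-types, and the two $\frk{k}_0$-actions agree by construction.

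For part (4), I take $v_0 \in \tilde{\ms H}_0$ to be the $\frk{k}$-highest weight vector, with $\mr{Spin}(2n+1)$-weight $(\mu,\ldots,\mu)$ and $H_0$-eigenvalue $-(n+\mu-1/2)$. Writing $\frk{p}^\pm \subset \frk{g}$ for the $\pm 1$ eigenspaces of $\mr{ad}(H_0)$ in the Hermitian symmetric decomposition, $\tilde\pi(\frk{p}^+)$ carries $\tilde{\ms H}_0$ into $\tilde{\ms H}_{-1}=0$; together with annihilation by the positive $\frk{k}$-roots, this shows $v_0$ is a $\frk{g}$-highest weight vector of the asserted weight $(-(n+\mu-1/2), \mu,\ldots, \mu)$. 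For cyclicity, $U(\frk{k}) v_0 = \tilde{\ms H}_0$ by $K$-irreducibility, and the operator $\tilde\pi(E_-) \in \tilde\pi(\frk{p}^-)$ from the $\frk{sl}(2)$-triple of Proposition \ref{prop1}(3) sends $\tilde\psi_{10\bf m}^\sigma \in \tilde{\ms H}_0$ to a nonzero multiple of $\tilde\psi_{20\bf m}^\sigma \in \tilde{\ms H}_1$; invoking $K$-irreducibility of $\tilde{\ms H}_1$ and iterating inductively in $I$ gives $U(\frk{g}) v_0 = \tilde{\mathcal H}$.

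Irreducibility then follows by a standard argument: any nonzero $\frk{g}$-submodule $W$ is $K$-stable, so by the multiplicity-one nature of the $K$-type decomposition in part (2), $W = \bigoplus_{I \in S} \tilde{\ms H}_I$ for some $S \subset \bb Z_{\ge 0}$. Let $I_0 = \min S$; then $\tilde\pi(\frk{p}^+) \tilde{\ms H}_{I_0} \subset W \cap \tilde{\ms H}_{I_0 - 1}$, which is zero by minimality of $I_0$ combined with $K$-irreducibility of $\tilde{\ms H}_{I_0-1}$. On the other hand, $\tilde\pi(E_+) \tilde\psi_{(I_0+1)0\bf m}^\sigma$ is a nonzero multiple of $\tilde\psi_{I_0\,0\bf m}^\sigma$ whenever $I_0 \ge 1$, forcing $I_0 = 0$ and hence $v_0 \in W$, whereupon cyclicity yields $W = \tilde{\mathcal H}$. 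I expect the branching verification in part (1) to require the most care, since $\mu = 0$ and $\mu = 1/2$ must be handled separately to confirm that the $\frk{so}(2n)$ half-spin pieces pair up correctly with the $\frk{so}(2n+1)$-spin highest weight.
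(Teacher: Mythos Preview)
Your proposal is correct and follows essentially the same approach as the paper: the $\frk{s}$-invariance of $\tilde{\ms H}_I$ via $[H_0,\frk{s}]=0$, identification of the $\frk{s}$-type by matching the known $\frk{r}$-branching against the $(\frk{so}(2n{+}1),\frk{so}(2n))$ branching rule, integration to $K$, and the highest-weight/cyclicity argument in part~(4) using the $\frk{sl}(2)$ lowering operator together with $\frk{s}$- (equivalently $K$-) irreducibility of each $\tilde{\ms H}_I$. The only cosmetic difference is that you argue irreducibility separately via a minimal-$I_0$ argument, whereas the paper gets it implicitly from unitarity once cyclicity by a highest weight vector is established.
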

\begin{proof}
1) Recall that $\frk{s}$ is the $\frk{so}(2n+1)$ Lie sub algebra of
$\frk{g}$ generated by $$ \{H_i, E_{\pm e^j\pm e^k}, E_{\pm
e^l}\,|\, 1\le i \le n, 1\le j < k \le n, 1\le l\le n\}, $$  and
$\frk{s}_0:=\frk{s}\cap \frk{g}_0$ is the compact real form of
$\frk{s}$. Since $H_0$ commutes with any element in $\frk{s}$, in
view of Remark \ref{rmk2}, we conclude that each $\tilde{\ms H}_I$
is invariant under $\tilde \pi(\frk{s})$, i.e.,
$(\tilde\pi|_{\frk{s}}, \tilde{\ms H}_I)$ is a representation of
$\frk{s}$.

Inside $\frk{s}$ there is an $\frk{so}(2n)$ Lie sub algebra
$\frk{r}$. Note that $H_1$, \ldots, $H_n$ are the generators of a
Cartan subalgebra of $\frk{r}$, and $H_1$, \ldots, $H_{n+1}$ are the
generators of a Cartan subalgebra of $\frk{s}$. Recall from part 4)
of Proposition \ref{Prmk2},
\begin{eqnarray}\label{brch1}
(\tilde \pi|_{\frk{r}}, \tilde{\ms H}_I)=\left\{ \begin{array}{ll}\bigoplus_{l=0}^I
\left(\tilde D_l^+\oplus \tilde D_l^-\right) & \mbox{if $\mu=1/2$}\\
\\
\bigoplus_{l=0}^I \tilde D_l^0 & \mbox{if $\mu=0$}
\end{array}\right.
\end{eqnarray} where $\tilde D_l^\pm$ is the highest weight $\frk{r}$-module with
highest weight $(l+1/2, 1/2, \cdots, 1/2,\pm 1/2)$, and $\tilde
D_l^0$ is the highest weight $\frk{r}$-module with highest
weight $(l, 0, \cdots, 0)$.

By applying the branching rule\footnote{See, for example, Theorem 3
of \S 129 of Ref. \cite{DZ73}. } for $(\frk{s},\frk{r})$, one finds
that there is only one solution to Eq. (\ref{brch1}):
$(\tilde\pi|_{\frk{s}},\tilde{\ms H}_I)$ is the highest weight
module of $\frk{s}$ with highest weight $$(I+\mu,
\mu, \cdots, \mu).$$

Part 1) is done.

2) Since $\tilde{\ms H_l}$ is the space of square integrable
solutions of Eq. $ \hat\Gamma_{-1}\psi = (l_\mu+1)\psi$ and $\tilde
\pi(H_0)=-\hat\Gamma_{-1}$, as a $\frk{k}$-module, $\tilde{\ms
H_l}=D(-l_\mu-1)\otimes D^l$ where $D^l$ is the irreducible module
of $\mr{Spin}(2n+1)$ with highest weight $(l+\mu, \mu, \cdots, \mu)$
and $D(-l_\mu-1)$ is the irreducible module of $\mr{Spin}(2)$ with
weight $-l_\mu-1$. Since $\mu$ is a half integer, the irreducible
unitary action of $\frk{k}_0$ on $\tilde{\ms H_l}$ can be promoted
to a unique irreducible unitary action of $K$. Therefore,
$\tilde{\mathcal H}$ is a unitary $K$-module and has the following
decomposition into isotypic components of $K$:
$$
{\tilde{\mathcal H}}=\bigoplus_{l=0}^\infty \tilde{\ms
H}_l=\bigoplus_{l=0}^\infty \left(D(-l_\mu-1)\otimes D^l\right).
$$

3) From the definition, it is clear that the action of $K$ on ${\tilde{\mathcal H}}$ is
compatible with that of $\frk{g}$ on ${\tilde{\mathcal H}}$, and its linearization agrees with
the action of $\frk{k}_0$. Part 2) says that ${\tilde{\mathcal H}}$ is $K$-finite.
Therefore, ${\tilde{\mathcal H}}$ is a
unitary $(\frk{g}, K)$-module.

4) Let $v\neq 0$ be a vector in $\tilde{\ms H}_0$ with
$\frk{g}$-weight $\left(-(n+\mu-1/2),\mu, \cdots, \mu\right)$. Since
this weight is the highest among all weights with a nontrivial
weight vector in $\tilde{\mathcal H}$, $V:=U(\frk{g})\cdot v\subset
\tilde{\mathcal H}$ is the unitary highest weight $\frk{g}$-module
with highest weight $\left(-(n+\mu-1/2),\mu, \cdots, \mu\right)$.
Since $\tilde{\ms H}_l$ is irreducible under $\frk{s}\subset
\frk{g}$, either $\tilde{\ms H}_l\subset V$ or $\tilde{\ms H}_l\cap
V=0$, so in particular $\tilde{\ms H}_0\subset V$. We
\underline{claim} that $\tilde{\ms H}_l\subset V$ for any $l\ge 0$,
consequently $V=\tilde{\mathcal H}$ and then part 4) is done. To
prove the claim, we note that $U(\frk{sl}(2))\cdot v$ must be the
discrete series representation of $\frk{sl}(2)$ with highest
($H_0$-) weight $-(n+\mu-1/2)$ because it is a nontrivial unitary
highest weight representation of the non-compact Lie algebra
$\frk{sl}_0(2)$. In view of the fact that $\tilde{\ms H}_l$ is the
eigenspace of $\tilde\pi(H_0)$ with eigenvalue $-(l_\mu+1)$,
$\tilde{\ms H}_l\,\cap \,(U(\frk{sl}(2))\cdot
v)=\mr{span}\{\tilde\pi(E_-^l)(v)\}$ must be one-dimensional. Then
$\tilde{\ms H}_l\,\cap \,V\neq 0$ because $\dim (\tilde{\ms
H}_l\,\cap \,V)\ge \dim (\tilde{\ms H}_l\,\cap
\,(U(\frk{sl}(2))\cdot v))=1$.

\end{proof}
\subsection{Proof of Theorem \ref{main}}
Viewing the twisting map $\tau$ as an equivalence of
representations, we get a representation $\pi$ of $\frk{g}$ which is
equivalent to $\tilde \pi$. Then the two propositions proved in the
previous subsection are true if we drop all ``tilde" there. Thus
${\mathcal H}$ is the unitary highest weight $(\frk{g}, K)$-module
with highest weight $\left(-(n+\mu-1/2),\mu, \cdots, \mu\right)$. By
a standard theorem of Harish-Chandra\footnote{See, for example,
Theorem 7 on page 71 of Ref. \cite{BK96}}, we know that ${\ms H}$ is
the unitary highest weight $G$-module with highest weight
$\left(-(n+\mu-1/2),\mu, \cdots, \mu\right)$ such that $(\pi,
{\mathcal H})$ is the underlying $(\frk{g}, K)$-module. One can
check that this highest weight module occurs at the first reduction
point of the Enright-Howe-Wallach classification
diagram\footnote{Page 101, Ref. \cite{EHW82}. In our case
$z=A(\lambda_0)=n+1/2$. It is in Case II when $\mu=0$, and in Case
III when $\mu=1/2$, see LEMMA 11.3 on page 128, Ref. \cite{EHW82}.
Note that, while there are two reduction points when $\mu= 0$, there
is only one reduction point when $\mu=1/2$.}. So part 1) is done.
Part 2) of Theorem \ref{main} is just a consequence of part 3) of
Proposition \ref{prop1}, and part 3) of Theorem \ref{main} is just a
consequence of part 2) of Proposition \ref{prop2}.

\appendix
\section{Geometrically transparent description}
Assume $n\ge 2$ is an integer and $\mu=0$ or $1/2$. The purpose of
this appendix is to give a geometrically transparent description of
the unitary highest weight module of $\widetilde{\mr{Spin}}(2,
2n+1)$ with highest weight $\left(-(n+\mu-1/2),\mu, \cdots,
\mu\right)$.

As usual, let ${\mathcal S}^{2\mu}$ be the pullback bundle under the
natural retraction ${\bb R}_*^{2n}\to \mr{S}^{2n-1}$ of the vector
bundle $\mr{Spin}(2n)\times_{\mr{Spin}(2n-1)} {\bf s}^{2\mu}\to
\mr{S}^{2n-1}$ with the natural $\mr{Spin}(2n)$-invariant
connection. Let $d^Dx$ be the Lebesgue measure on ${\bb R}^{2n}$. As
is standard in geometry, we use $ L^2({\mathcal S}^{2\mu})$ to
denote the Hilbert space of square integrable (with respect to
$d^Dx$) sections of ${\mathcal S}^{2\mu}$. We have shown that
$\tilde{\ms H}(\mu)=L^2({\mathcal S}^{2\mu})$, therefore,
$\left(\tilde \pi, L^2({\mathcal S}^{2\mu})\right)$ is the unitary
highest weight module of $\widetilde{\mr{Spin}}(2, 2n+1)$ with
highest weight $\left(-(n+\mu-1/2),\mu, \cdots, \mu\right)$. To
describe the infinitesimal action of $\widetilde{\mr{Spin}}(2,
2n+1)$ on $C^\infty({\mathcal S}^{2\mu})$, it suffices to describe
how $M_{\alpha,0}$, $M_{D+1,0}$ and $M_{-1, 0}$ act as differential
operators. It is easy to see that $M_{\alpha,0}$, $M_{D+1,0}$ and
$M_{-1, 0}$ are equal to $i\sqrt r\nabla_\alpha\sqrt r$, ${1\over
2}\left(\sqrt r\Delta_\mu\sqrt r+r-{c\over r}\right)$ and ${1\over
2}\left(\sqrt r\Delta_\mu\sqrt r-r-{c\over r}\right)$ respectively.
Here $\Delta_\mu$ is the Laplace operator twisted by ${\mathcal
S}^{2\mu}$. For example, for $\psi\in C^\infty({\mathcal
S}^{2\mu})$, we have
\begin{eqnarray}
(M_{\alpha,0}\cdot \psi)(r, \Omega) & = & i\sqrt r\nabla_\alpha
\left (\sqrt r\psi(r, \Omega)\right).
\end{eqnarray}

\end{document}